\newtheorem{definition}{Definition}
\newtheorem{lemma}{Lemma}
\newtheorem{remark}{Remark}
\newtheorem{theorem}{Theorem}
\newtheorem{proposition}{Proposition}
\newtheorem{example}{Example}
\newcommand{\vp}{\varphi}
\newcommand{\C}{\mathbb{C}}
\newcommand{\N}{\mathbb{N}}
\newcommand{\R}{\mathbb{R}}
\newcommand{\T}{\mathbb{T}}
\newcommand{\Z}{\mathbb{Z}}
\newcommand{\be}{\begin{equation}}
\newcommand{\eeq}{\end{equation}}
\newcommand{\bet}{\begin{equation*}}
\newcommand{\eeqt}{\end{equation*}}
\newcommand{\bea}{\begin{eqnarray}}
\newcommand{\eeqa}{\end{eqnarray}}
\newcommand{\beat}{\begin{eqnarray*}}
\newcommand{\eeqat}{\end{eqnarray*}}
\newcommand{\h}[1]{\mathcal{#1}}
\newcommand{\hil}{\mathcal{H}}
\newcommand{\hi}{\mathcal{H}}
\newcommand{\ki}{\mathcal{K}}
\newcommand{\lh}{\mathcal{L(H)}}
\newcommand{\lk}{\mathcal{L(K)}}
\renewcommand{\th}{\mathcal{T(H)}}
\newcommand{\tc}{\mathcal{T(H)}}
\newcommand{\sfq}{\mathsf{Q}}
\newcommand{\E}{\mathsf{E}}
\newcommand{\F}{\mathsf{F}}
\newcommand{\G}{\mathsf{G}}
\newcommand{\A}{\mathsf{A}}
\newcommand{\Ecan}{\E_\mathrm{can}}
\renewcommand{\P}{\mathsf{Q}} 
\newcommand{\tr}[1]{\mathrm{tr}\left[ {#1} \right]} 
\def\<{\langle} 
\def\>{\rangle} 
\newcommand{\M}{{\mathcal M}}
\newcommand{\I}{{\mathcal I}}
\newcommand{\kb}[2]{|#1\,\rangle\langle\,#2|} 
\begin{document}
\title{On measurements of the canonical phase observable}

\author{Juha-Pekka Pellonp\"a\"a}
\address{Turku Centre for Quantum Physics, Department of Physics and Astronomy, University of Turku, FI-20014 Turku, Finland}
\email{juhpello@utu.fi}

\author{Jussi Schultz}
\address{Dipartimento di Matematica, Politecnico di Milano, Piazza Leonardo da Vinci 32, I-20133 Milano, Italy}
\email{jussi.schultz@gmail.com}

\begin{abstract}
Measurements of single-mode  
phase observables are studied in the spirit of the quantum theory of measurement.
We determine the minimal measurement models of phase observables and consider methods of measuring such observables by using a double homodyne detector.
We show that, in principle, the canonical phase distribution of the signal state can be measured via double homodyne detection by first processing the state using a two-mode unitary channel.
\end{abstract}
\maketitle

\section{Introduction}

In quantum optics, the concept of phase for a single mode electromagnetic field  
has remained a somewhat controversial topic.
 Alternative descriptions for phase observables have been developed, and
hundreds of articles (see, e.g., the reference lists of \cite{Ly,PeBa9,Pel02,PIO,T48}) and several monographs  (e.g., \cite{MAWQAP,PIO})
have been written on the subject 
since Dirac's famous paper \cite{Di} published in 1927. 
A major reason for this variety of phase theories is that 
in trying to define the phase of a quantum oscillator, one can clearly 
see the restrictions of the conventional
approach which identifies observables with selfadjoint operators, or
equivalently, their spectral measures. In fact, it is well known that no
spectral measure satisfies all physically relevant conditions posed on
phase observables (see, e.g.,\ \cite{PSAQT, Pel02}),
and this problem has been
overcome with the introduction of the more general concept of
observables as (normalized) positive operator valued measures (POVMs).

A natural requirement for the description of a phase measurement is {\em covariance} with respect to phase shifts. In other words, the application of a phase shifter on the field prior to the measurement should only shift the phase distribution without changing its shape. 
Although there exist an infinite number of phase shift covariant POVMs, 
it is generally accepted that the canonical phase measurement for the single-mode radiation field is represented by the {London phase distribution} \cite{London}.
Hence, the canonical phase measurement is described by the canonical phase observable $\Ecan$, 
\begin{equation*}
\Ecan(X)=\sum_{m,n=0}^\infty\frac{1}{2\pi}\int_Xe^{i(m-n)\theta}d\theta \ \kb{m}{n} = \frac{1}{2\pi}\int_X  \kb{\theta}{\theta} d\theta 
\end{equation*}
where $X\subseteq [0,2\pi)$, $\{|m\>\}$ is the number basis,
and $|\theta\>=\sum_{m=0}^{\infty} e^{im\theta} |m\>$
is the (formal) {Susskind-Glogower phase state} \cite{SuGl}. 
We also recall that the canonical phase measurement arises as the limiting distribution of the Pegg-Barnett formalism \cite{PeBa9}.
In addition, $\Ecan$ has been independently derived by Helstrom \cite{Helstrom} and Holevo \cite{PSAQT} in the more general context of quantum estimation theory.

The canonical phase has a number of properties which makes it an optimal choice among other phase observables:
For instance, $\Ecan$ is pure, i.e.\ an extreme point of the convex set of all POVMs \cite{HePe}. Any other covariant phase observable $\F$ is connected to the canonical phase via a quantum channel $\Phi$ as $\F(X) = \Phi^*\left(\Ecan(X)\right)$ \cite{Ha2}.
The canonical phase 
 is also (essentially) the only covariant phase observable which generates
number shifts \cite{LaPe00}.
Furthermore, 
$\Ecan$ and the photon number $N$ are noncoexistent, probabilistically and
value complementary observables \cite{BuLaPeYl01}.
Finally, the canonical phase distribution of coherent states $|\alpha\>$, $\alpha\in\C$,
tends to a Dirac delta distribution in the classical limit $|\alpha|\to\infty$ 
 \cite{LaPe00}.  A list of further properties of $\Ecan$ can be found, e.g., in p.\ 51 of \cite{Pel02}.

The problem of finding a suitable realistic measurement model for the canonical phase observable is the last big open problem concerning the quantum description of the phase of an electromagnetic field. Of course, $\Ecan$ (or any observable of the field) can be `measured' indirectly or `sampled' by first measuring some informationally complete observable (e.g.\ the \ $Q$-function or the collection of rotated quadratures) and then constructing the canonical phase distribution.  However, these approaches may hardly be regarded as direct measurements in the spirit of the quantum theory of measurement \cite{BuLaMi}. 
Some suggestions for direct measurements of $\Ecan$ can be found from literature, see e.g.\ \cite{PeBa8}, but they have not led to experiments.
It seems that from the experimental point of view, the most easily accessible phase observables are those arising as the angle margins of certain (translation covariant) phase space
observables. Such observables are often referred to as (covariant) phase space phase observables, the most familiar example being the angle margin of the Husimi $Q$-function of the field \cite{Hu}. 
A natural measurement setup for these phase space measurements is then the double homodyne detection scheme (also known as eight-port homodyne detection) (see, e.g., \ \cite{MQSL}), which has also been demonstrated experimentally  \cite{WaCa}. Even though the phase space phase observables seem like a natural choice for measuring the phase of the electromagnetic field, they suffer from certain drawbacks when compared with the canonical phase. For instance, the phase space phase observables are not pure \cite{CaHePeTo}, and the canonical phase $\Ecan$ gives smaller (minimum) variance in large amplitude
coherent states than phase space phase observables  \cite{LaPe00}.

The purpose of this paper is to take a step in the direction of obtaining a realistic measurement scheme for the canonical phase. 
First, we determine those measurement models of covariant phase observables which are minimal (in a sense to be defined). In particular, we obtain as a special case the minimal measurement models of the canonical phase $\Ecan$. Since previous works indicate that the natural candidates for  phase measurements are  actually measurements of some phase space observables, which would then give the phase distribution after integration over the radial part, we then study joint measurements of arbitrary POVMs and the canonical phase.
After that we focus on the so-called phase shift covariant phase space observables, where we also present some physically relevant examples.
Finally, we turn to the double homodyne detection scheme and show that, in principle, the canonical phase distribution of the signal state can be measured with the setup by adding an extra two-mode unitary coupling in front of the apparatus.

\section{Preliminaries}

Throughout this paper we consider a single mode electromagnetic field as our physical system. The Hilbert space of the system is therefore $\hil\simeq L^2 (\R)$, and it is spanned by the orthonormal basis consisting of the number states $\vert n\rangle $, $n\in\N = \{0,1,2,\ldots\}$, which in the coordinate representation are identified with the Hermite functions. We denote by $a^*$, $a$ and $N=a^*a$ the creation, annihilation and number operators related to this basis. Moreover, $I_\hi$ or briefly $I$ is the identity operator of $\hi$.
We denote by $\lh$ and $\th$ the sets of bounded and trace class operators on $\hil$. The states of the field are represented by positive trace class operators $\rho\in\th$ with unit trace (i.e.,\ density operators), and the observables are represented by normalized positive operator valued measures (POVMs) $\E:\mathcal{B}(\Omega)\to\lh$ on the Borel $\sigma$-algebra of the topological space $\Omega$ of possible measurement outcomes.\footnote{Note that the domain of definition of a POVM can be a more general $\sigma$-algebra $\h A$ consisting of subsets of the measurement outcome set $\Omega$.} We say that an observable is sharp if it is projection valued, that is, a spectral measure. The measurement outcome probabilities are  given by the probability measures $X\mapsto \tr{\rho\E(X)}$. For our purposes, the relevant measurement outcome spaces are the torus $\T$, which is identified with $[0,2\pi)$ with addition modulo $2\pi$, the space of nonnegative real numbers $\R_+$, and the phase space $\C\simeq \R^2\simeq \R_+\times \T$.

While observables represent the statistical description of a measurement, they say nothing about the change in the system's state caused by the measurement. The effect of the measurement on the system is captured in the concept of an instrument \cite{DaLe}. Recall that a completely positive instrument is a map $\h I:\h B(\Omega)\to \h L(\tc)$, $X\mapsto \h I_X$ such that each $\h I_X:\tc\to\tc$ is completely positive, $\tr{\h I_X (\rho)} = 1$ for all states $\rho$, and the map $X\mapsto \h I_X (\rho)$ is $\sigma$-additive for all states $\rho$. Each instrument determines the dual instrument  $\h M:\h B(\Omega)\to \h L(\lh)$ by $\tr{\rho\M_X(B)}=\tr{\I_X(\rho)B}$ for $X\in\h B(\Omega)$, $\rho\in\tc$ and $B\in\lh$, and  the associated observable $\E$ via $
\tr{\rho \E(X)} = \tr{\h I_X(\rho)}$ , or equivalently, as $\E(X) = \M_X(I)$. In this case we say that the instrument (or its dual) is {\em $\E$-compatible}.  It is well known that, for a given POVM $\mathsf E$, there are infinitely many $\E$-compatible instruments, that is, an observable can be measured in many ways.

Any completely positive instrument can also be dilated to a quantum measuring process \cite{Oz84}, that is, any instrument can be realized as a unitary measurement model of the corresponding POVM: For any instrument $\mathcal I$ there exist a Hilbert space $\h K$ associated with the probe system, an initial pure state $|\xi\>\<\xi|$ of the probe, a unitary measurement interaction $U$ between the system and the probe, and a spectral measure $\P$ of the probe, the pointer observable, which define the instrument $\mathcal I$  uniquely via
$$
\h I_X (\rho) = \textrm{tr}_{\h K}\left[ U(\rho \otimes\vert \xi\rangle\langle\xi \vert)U^*I\otimes \sfq(X)\right]
$$
where $ \textrm{tr}_{\h K}[\cdot]$ denotes partial trace over the Hilbert space of the probe. Since for each POVM $\mathsf E$, there exists an infinite number of instruments which are compatible with $\mathsf E$, and any instrument has infinitely many realizations, it is natural to ask which are the minimal measurement models of $\mathsf E$. In such a measurement model there would be no unnecessary degrees of freedom in the measuring process. This question was solved in \cite{Pe13b}.

We may now turn our attention to the problem of covariant phase observables. As already mentioned before, covariance here is with respect to phase shifts which then  reflects the requirement that any phase shift performed on the field prior to the measurement should cause a corresponding shift in  the distribution of the measurement outcomes. 
\begin{definition}\label{def:covariant_phase_observable}
An observable $\E:\mathcal{B}([0,2\pi))\to\lh$ is a {\em covariant phase observable} if 
\begin{equation}\label{eqn:covariant_phase_observable}
e^{i\theta N} \E(\Theta ) e^{-i\theta N}  = \E(\Theta  \, \dot{+}\,  \theta)
\end{equation}
for all $\Theta\in\mathcal{B}([0,2\pi))$ and $\theta\in[0,2\pi)$, where $\dot{+}$ denotes addition modulo $2\pi$.
\end{definition}
Condition  \eqref{eqn:covariant_phase_observable} already tells us quite a lot about the structure of the observables. Indeed, the following structure theorem for phase observables is well-known, see e.g.\ \cite{LaPe99}.
\begin{theorem}\label{thr:structure_of_phase_observables}
Let $\E:\mathcal{B}([0,2\pi))\to\lh$ be a covariant phase observable. Then there exists a unique positive semidefinite matrix $(c_{mn})_{m,n\in\N}$ with the unit diagonal (i.e.\ $c_{mm}\equiv1$) such that 
\begin{equation}
\E(\Theta) = \sum_{m,n=0}^\infty c_{mn} \, \frac{1}{2\pi}\int_\Theta e^{i\theta(m-n)} \, d\theta \, \vert m \rangle\langle n\vert
\end{equation}
for all $\Theta\in\mathcal{B}([0,2\pi))$. We say that $(c_{mn})_{m,n\in\N}$ is the {\em phase matrix} associated to $\E$.
\end{theorem}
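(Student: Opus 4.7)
The strategy is to reduce the covariance condition to a scalar statement on each matrix element of $\E$ in the number basis, solve the resulting scalar equation via Fourier analysis on the torus, and then read off the required positivity and normalization properties from those of $\E$.

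\emph{Step 1 (reduce to matrix elements).} For each $m,n\in\N$, define the complex Borel measure $\mu_{mn}$ on $[0,2\pi)$ by $\mu_{mn}(\Theta):=\<m|\E(\Theta)|n\>$. Since $|m\>$ is an eigenvector of $N$ with eigenvalue $m$, sandwiching the covariance identity \eqref{eqn:covariant_phase_observable} between $\<m|$ and $|n\>$ gives
\begin{equation*}
\mu_{mn}(\Theta\,\dot{+}\,\theta)=e^{i(m-n)\theta}\,\mu_{mn}(\Theta)\qquad\text{for all }\Theta\in\mathcal{B}([0,2\pi)),\ \theta\in[0,2\pi).
\end{equation*}

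\emph{Step 2 (solve the scalar covariance).} I would show that any finite complex Borel measure $\mu$ on $[0,2\pi)$ satisfying $\mu(\Theta\,\dot{+}\,\theta)=e^{ik\theta}\mu(\Theta)$ for a fixed $k\in\Z$ must be a multiple of $(2\pi)^{-1}e^{ik\theta}d\theta$. For this, compute the Fourier coefficients $\hat{\mu}(k'):=\int_0^{2\pi}e^{-ik'\phi}d\mu(\phi)$; the translation rule above yields $e^{ik'\theta}\hat{\mu}(k')=e^{ik\theta}\hat{\mu}(k')$ for all $\theta$, forcing $\hat{\mu}(k')=0$ unless $k'=k$. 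Uniqueness of measures with prescribed Fourier coefficients then pins $\mu$ down, and applied to $\mu_{mn}$ with $k=m-n$ produces a unique scalar $c_{mn}\in\C$ with $\mu_{mn}(\Theta)=c_{mn}\,(2\pi)^{-1}\int_\Theta e^{i(m-n)\theta}\,d\theta$. Summing over $m,n$ in the weak sense (bilinear form) gives the claimed expression for $\E(\Theta)$, and uniqueness of $(c_{mn})$ is immediate since the matrix elements $\<m|\E(\Theta)|n\>$ determine $c_{mn}$ through the specific value $\Theta=[0,2\pi)$ already fails to give nonzero contribution off-diagonal---so one recovers $c_{mn}$ as a Fourier coefficient of $\mu_{mn}$.

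\emph{Step 3 (normalization and positivity).} Evaluating at $\Theta=[0,2\pi)$ yields $\E([0,2\pi))=I$, whence $\mu_{mm}([0,2\pi))=\delta_{mm}=1$, and since $\int_0^{2\pi}1\,d\theta/2\pi=1$ this forces $c_{mm}=1$. For positive semidefiniteness of $(c_{mn})$, fix a finitely supported sequence $(\lambda_n)_{n\in\N}$ and set $\psi:=\sum_n\lambda_n|n\>$. The positivity $\<\psi|\E(\Theta)|\psi\>\ge0$ translates into a nonnegative trigonometric polynomial $\theta\mapsto(2\pi)^{-1}\sum_{m,n}\bar{\lambda}_m\lambda_n c_{mn}e^{i(m-n)\theta}$ integrated over $\Theta$; integrating against a Dirac-approximating nonnegative test function (or using Bochner/Herglotz for finite trigonometric polynomials) yields $\sum_{m,n}\bar{\lambda}_m\lambda_n c_{mn}\ge 0$, which is exactly positive semidefiniteness of the matrix.

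\emph{Main obstacle.} The essentially Fourier-analytic step 2 is straightforward once framed correctly, and normalization in step 3 is trivial; the delicate point is the rigorous passage from pointwise/integrated nonnegativity of the Radon--Nikodym density to semidefiniteness of the constant matrix $(c_{mn})$, especially handling the infinite-index case by reducing to finitely supported $(\lambda_n)$. Convergence of the double sum defining $\E(\Theta)$ must also be interpreted in the weak operator sense, justified by the boundedness $|c_{mn}|\le\sqrt{c_{mm}c_{nn}}=1$ that already follows from the (finite-truncation) positive semidefiniteness.
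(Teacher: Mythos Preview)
The paper does not actually prove this theorem; it is stated as a well-known result with a citation to \cite{LaPe99}. Your argument is correct and essentially the standard one.

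One small simplification in Step~3: since the density
\[
f(\theta)=\frac{1}{2\pi}\sum_{m,n}\overline{\lambda_m}\,\lambda_n\, c_{mn}\,e^{i(m-n)\theta}
\]
is a continuous trigonometric polynomial (the sum is finite) whose integral over every Borel set $\Theta$ is nonnegative, one has $f\ge 0$ everywhere by continuity, and in particular $f(0)=(2\pi)^{-1}\sum_{m,n}\overline{\lambda_m}\,\lambda_n\, c_{mn}\ge 0$; no Dirac approximation or Bochner/Herglotz machinery is needed. Also, the apparent circularity you note at the end (using $|c_{mn}|\le 1$ to control the infinite sum before establishing positive semidefiniteness) is harmless: the scalars $c_{mn}$ are already defined from the individual matrix elements in Step~2, and the finite-matrix positivity in Step~3 uses only finitely many of them, so the bound $|c_{mn}|\le 1$ is available before you ever assemble the full double series. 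Finally, the weak-operator expansion $\E(\Theta)=\sum_{m,n}\langle m|\E(\Theta)|n\rangle\,|m\rangle\langle n|$ holds for any bounded operator regardless of the size of the coefficients, so that step in fact needs no bound at all.
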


Let $\E$ be a phase observable with the phase matrix 
$(c_{mn})_{m,n\in\N}$. It is a standard result (i.e.\ a minimal Kolmogorov decomposition of a positive semidefinite matrix) that one can write, for all $m,\,n\in\N$,
$$
c_{mn}=\<\eta_m|\eta_n\>
$$
where the unit vectors $\eta_m\in\ki\subseteq\hi$ span a Hilbert subspace $\ki$ of $\hi$.\footnote{Recall that $\ki$ and vectors $\eta_m$ are unique up to a unitary map, i.e.\ if $\{\eta'_m\}\subseteq\ki'$ is another family of vectors satisfying the above requirements then there exists a unitary map $U:\,\ki\to\ki'$ such that $\eta_m'=U\eta_m$, $m\in\N$.}
We say that the $\eta_m$'s are (minimal) {\it structure vectors} of $\E$.

From Theorem 1 of \cite{Pe13b} follows that any $\E$-compatible completely positive dual instrument is of the form 
$$
\M_\Theta(B) = \sum_{m,n=0}^\infty  \, \frac{1}{2\pi}\int_\Theta \<\eta_m|T_\theta(B)\eta_n\>e^{i\theta(m-n)} \, d\theta \, \vert m \rangle\langle n\vert
$$
for all $\Theta\in\mathcal{B}([0,2\pi))$ and $B\in\lh$, where any $T_\theta:\,\lh\to\lk$ is a completely positive dual channel.\footnote{Recall that $\theta\mapsto\<\eta|T_\theta(B)\eta\>$ is measurable for all $\eta\in\lk$, $B\in\lh$, and the family $\{T_\theta\}_{\theta\in[0,2\pi)}$ is unique (almost everywhere).}
Indeed, by using the (minimal) Kraus decompositions
$$
T_\theta(B)=\sum_{k=1}^{r(\theta)}\A_k(\theta)^*B\A_k(\theta)
$$
where $\A_k(\theta):\,\ki\to\hi$ are bounded operators (such that $\sum_k\A_k(\theta)^*\A_k(\theta)=I_\ki$) and $r(\theta)\in\N\cup\{\infty\}$ is the rank of the channel $T_\theta$, one sees that
$$
B\mapsto\M_\Theta(B)=\int_\Theta \sum_{k=1}^{r(\theta)}\A'_k(\theta)^*B\A'_k(\theta)\frac{d\theta}{2\pi}
$$
is completely positive; here $\A'_k(\theta):=\A_k(\theta)\sum_{n}e^{-i\theta n}|{\eta_n}\>\<n|$ is a (minimal) pointwise Kraus operator of $\M$ \cite{Pe13a}. Moreover,
$\M_\Theta(I_\hi)=\E(\Theta)$ so that $\M$ is $\E$-compatible.
We say that $\M$ is {\it rank-1} if $r(\theta)\equiv1$.

Following
Davies \cite{Da}, we say that an $\E$-combatible instrument $\M$ is {\it (phase shift) covariant} if 
$$
 e^{i\theta N}\M_\Theta(e^{-i\theta N}B e^{i\theta N}) e^{-i\theta N}\equiv
\M_{\Theta\dot{+}\theta}(B)\quad\Longleftrightarrow\quad
e^{i\theta N}\I_\Theta(e^{-i\theta N}\rho e^{i\theta N}) e^{-i\theta N}\equiv\I_{\Theta\dot{+}\theta}(\rho).
$$
It is easy to check that this holds exactly when 
$$
T_\theta(B)\equiv T(e^{-i\theta N}Be^{i\theta N})
$$
where $T:\,\lh\to\lk$ is a completely positive channel. By using a (minimal) Kraus decomposition of $T$, i.e.\ $T(B)=\sum_{k=1}^{r}\A_k^*B\A_k$,
one sees that $T_\theta(B)=\sum_{k=1}^{r}\A_k(\theta)^*B\A_k(\theta)$
where $\A_k(\theta):=e^{i\theta N}\A_k$ for (almost) all $\theta$.

\section{Minimal pure measurement models of covariant phase observables}

\subsection{General case}

We will next consider {\it minimal} (pure) measurement models of phase observables. In such models, the ancillary or probe Hilbert space is the `smallest' possible.
It is quite obvious that any minimal measurement model of $\E$ is given by a rank-1 instrument, that is, $\M$ is of the form
$$
\M_\Theta(B) = \sum_{m,n=0}^\infty  \, \frac{1}{2\pi}\int_\Theta \<\eta_m|\A(\theta)^*B\A(\theta)\eta_n\>e^{i\theta(m-n)} \, d\theta \, \vert m \rangle\langle n\vert
$$
where $\A(\theta):\,\ki\to\hi$ is an isometry for (almost) all $\theta$, see \cite[Section 5]{Pe13b}.
Now the ancillary space is (unitarily equivalent with) $\hi'=L^2([0,2\pi))$, consisting of square integrable `wave functions' $\psi:\,[0,2\pi)\to\C$, and the pointer observable is the canonical spectral measure $\P:\,\mathcal{B}([0,2\pi))\to {\mathcal L}(\hi')$,
$$
(\P(\Theta)\psi)(\theta):=\chi_\Theta(\theta)\psi(\theta)
$$
where $\chi_\Theta$ is the characteristic function of $\Theta$ \cite[Remark 3]{Pe13b}.
Since the vectors $e_n(\theta):=e^{-i\theta n}/\sqrt{2\pi}$, $n\in\Z$, form an orthonormal basis of $\hi'$ one can write
$$
\P(\Theta) = \sum_{m,n=-\infty}^\infty\<e_m|\P(\Theta)e_n\>\vert e_m \rangle\langle e_n\vert
=\sum_{m,n=-\infty}^\infty\frac{1}{2\pi}\int_\Theta e^{i\theta(m-n)} \, d\theta \, \vert e_m \rangle\langle e_n\vert.
$$
If the initial pure state $\xi\in\hi'$, $\|\xi\|=1$, of the apparatus is given, then one can construct a unitary map $U:\,\hi\otimes\hi'\to\hi\otimes\hi'$ by extending
$$
[U(|n\>\otimes\xi)](\theta)=(\A(\theta)\eta_n)\otimes e_n(\theta).
$$
Indeed, for all $m,\,n\in\N$,
\begin{eqnarray*}
\tr{U(|n\>\<m|\otimes|\xi\>\<\xi|)U^*(I_\hi\otimes\P(\Theta))}&=&
\<U(|m\>\otimes\xi)|(I_\hi\otimes\P(\Theta))U(|n\>\otimes\xi)\> \\
&=&\int_{\Theta}\<{[U(|m\>\otimes\xi)](\theta)}|[U(|n\>\otimes\xi)](\theta)\>d\theta \\
&=&\int_{\Theta}\<\A(\theta)\eta_m|\A(\theta)\eta_n\>e^{i\theta(m-n)}\frac{d\theta}{2\pi} \\
&=&
\int_{\Theta}\<\eta_m|\eta_n\>e^{i\theta(m-n)}\frac{d\theta}{2\pi}=\tr{|n\>\<m|\E(\Theta)}
\end{eqnarray*}
so that, for any state $\rho\in\th$, the phase measurement probability distribution $\Theta\mapsto\tr{\rho\E(\Theta)}$ can be obtained
by measuring the pointer $\P$ in the state $U(\rho\otimes|\xi\>\<\xi|)U^*$ where $\xi$ is the initial pure state of the apparatus and $U$ describes the measurement interaction between the system and probe.

If, additionally, we assume that the rank-1 instrument (or the minimal measurement) is phase shift covariant then one must have
$\A(\theta)=e^{i\theta N}\A$
where $\A:\,\ki\to\hi$ is an isometry.
In this case,
$$
\M_\Theta(B) = \M^\A_\Theta(B) := \sum_{m,n=0}^\infty  \, \frac{1}{2\pi}\<\A\eta_m|e^{-i\theta N}Be^{i\theta N}\A\eta_n\>\int_\Theta e^{i\theta(m-n)} \, d\theta \, \vert m \rangle\langle n\vert.
$$
Since $c_{mn}=\<\eta_m|\eta_n\>=\<\A\eta_m|\A\eta_n\>$ the vectors $\A\eta_m$ are new structure vectors of $\E$ and it is enough to study the case
$\A={\bm1}$ where $\bm1\eta:=\eta$ for all $\eta\in\ki$. 
The predual instrument of $\M^{\bm1}$ is
$$
\I_\Theta^{\bm1}(\rho)=\sum_{m,n=0}^\infty  \<n|\rho|m\> \frac{1}{2\pi}\int_\Theta   e^{i\theta N}|\eta_n\>\<\eta_m|e^{-i\theta N}e^{i\theta(m-n)} d\theta
$$
and it describes the minimal covariant measurement of the phase observable $\E$ with structure vectors $\eta_m$.

\begin{remark}\rm\label{remarkki1}
We have seen that the minimal ancillary space of any phase measurement is (unitarily equivalent with) $L^2([0,2\pi))$ and the pointer observable can be identified with the canonical spectral measure $\P$. However, the physical interpretation of $\P$ in the case of a single-mode optical field is more or less unclear.

If, instead of the minimal  space $L^2 ([0,2\pi))$, we consider an ancillary space consisting of two modes with the lowering operators $a\otimes I_\hi$ and $I_\hi\otimes a$ (operating on the two-mode Hilbert space $\hi\otimes\hi$) then $\P$ could be viewed as a spectral measure acting on a subspace of  $\hi\otimes\hi$ in the following way:
Choose an isometry $J:\,L^2([0,2\pi))\to\hi\otimes\hi$. Thus, $\hi'':=JL^2([0,2\pi))\subseteq\hi\otimes\hi$ is a subspace and the vectors $|m\>\>:=Je_m$, $m\in\Z$, form its orthonomal basis. Now
$$
J\P(\Theta)J^* = \sum_{m,n=-\infty}^\infty \, \frac{1}{2\pi}\int_\Theta e^{i\theta(m-n)} \, d\theta \, \vert m \>\>\<\< n\vert
$$
is a spectral measure on the minimal ancillary space $\hi''$.

For example, one may choose $|m\>\>=|m,0\>\>$ where, for all $m\in\Z,\;k\in\N$,
$$
|m,k\>\>:=
\begin{cases}
|k+m\>\otimes|k\>, & m\ge 0,\\
|k\>\otimes |k-m\>, & m<0,
\end{cases}
$$
are Ban's relative number states (RNSs) \cite{Ban1} which are the eigenstates of the 
number difference $\Delta N=N\otimes I - I\otimes N$ and the total number
$\Sigma N=N\otimes I + I\otimes N $ operators:
$$
\Delta N |m,k\>\>= m |m,k\>\>,\qquad
\Sigma N |m,k\>\>= (2k+|m|) |m,k\>\>.
$$
Moreover, they form an orthonormal basis of $\hi\otimes\hi$ since
\begin{equation}
\label{Banban}
|m\>\otimes|n\>=|m-n,\min\{m,n\}\>\>,\qquad m,n\in\N.
\end{equation}
Now $J\P J^*$ is the restriction of the spectral measure $\mathsf O_{\rm angle}:\,\mathcal B([0,2\pi))\to\mathcal L(\hi\otimes\hi)$,
\begin{equation}
\label{Qpilkku}
\mathsf O_{\rm angle}(\Theta):=\sum_{m,n=-\infty}^\infty \, \frac{1}{2\pi}\int_\Theta e^{i\theta(m-n)} \, d\theta \sum_{k=0}^\infty\, \vert m,k \>\>\<\< n,k\vert
\end{equation}
to $\hi''$ (see, Remark \ref{remu2}). Note that $\tr{(\rho\otimes|0\>\<0|)\mathsf O_{\rm angle}(\Theta)}=\tr{\rho\Ecan(\Theta)}$ where $\Ecan$ is the canonical phase.
\end{remark}

\subsection{Canonical phase}
We will now consider the canonical phase observable $\Ecan$ which is determined by the constant phase matrix $c_{mn}\equiv1$ or by structure vectors $\eta_m\equiv\eta$ where $\eta\in\hi$ is any fixed unit vector, that is,
$$
\Ecan(\Theta) = \sum_{m,n=0}^\infty \frac{1}{2\pi}\int_\Theta e^{i\theta(m-n)} \, d\theta \, \vert m \rangle\langle n\vert.
$$
Now $\ki=\C\eta\simeq\C$ and the Kraus operators $\A(\theta):\,\C\eta\to\hi$ (related to a minimal measurement) can be identified with the unit vectors $\psi_\theta:=\A(\theta)\eta\in\hi$.  
Now the corresponding instrument is 
$$
\M_\Theta(B) = \sum_{m,n=0}^\infty  \, \frac{1}{2\pi}\int_\Theta \<\psi_\theta|B\psi_\theta\>e^{i\theta(m-n)} \, d\theta \, \vert m \rangle\langle n\vert
$$
and 
$$
[U(|n\>\otimes\xi)](\theta)=\psi_\theta\otimes e^{-i\theta n}.
$$
Especially, the phase shift covariant (minimal) instruments are of the form\footnote{Recall that any $\Ecan$-compatible instrument is nuclear, i.e.\ of the form
$
\I_\Theta(\rho)=\int_\Theta\sigma_\theta\,\tr{\rho\Ecan(d\theta)}
$
where $\{\sigma_\theta\}_{\theta\in[0,2\pi)}\subseteq\th$ is a (measurable) family of states \cite[Theorem 2]{Pe13b}. Now $\sigma_\theta$ can be interpreted as a posterior state when the outcome $\theta$ is obtained. It does not depend on the initial state $\rho$.
In the case of a minimal covariant measurement of the canonical phase, $\sigma_\theta=e^{i\theta N}|\eta\>\<\eta|e^{-i\theta N}$ for some unit vector $\eta$. The choices $\eta=|n\>$, $n\in\N$, give the only rotation invariant states
$\sigma_\theta\equiv|n\>\<n|$.
}
$$
\I_\Theta^{\bm1}(\rho)=\sum_{m,n=0}^\infty  \<n|\rho|m\> \frac{1}{2\pi}\int_\Theta   e^{i\theta N}|\eta\>\<\eta|e^{-i\theta N}e^{i\theta(m-n)} d\theta
=\int_\Theta e^{i\theta N}|\eta\>\<\eta|e^{-i\theta N}\,\tr{\rho\Ecan(d\theta)}
$$
and the related measurement interactions are given by the formula 
$[U(|n\>\otimes\xi)](\theta)=(e^{i\theta N}\eta)\otimes e^{-i\theta n}$ which can be extended to a unitary operator in infinitely many ways.
Especially, if one chooses $\eta=|0\>$ and $\xi=e_0$ then
$U(|n\>\otimes e_0)=|0\>\otimes e_n$, $n\in\N$, which extends, e.g., to a unitary operator $U:\,\hi\otimes\hi'\to\hi\otimes\hi'$,
$$
U=\sum_{k=-\infty}^{-1}\sum_{n=0}^\infty|n\>\<n|\otimes|e_k\>\<e_k|
+\sum_{k=0}^{\infty}\sum_{n=0}^\infty|k\>\<n|\otimes|e_n\>\<e_k|
$$
whose eigenvectors are $|n\>\otimes e_k$, $n\ge0$, $k<0$ (related to the eigenvalue $e^{i0}=1$) and, for all $k,\,n\ge 0$, 
\begin{eqnarray*}
\psi_{nk}^\pm&=&
2^{-1/2}(|n\>\otimes e_k\pm|k\>\otimes e_n),\qquad k\ne n, \\
\psi_{nn}^+&=&|n\>\otimes e_n,
\end{eqnarray*}
which are related to the eigenvalues $+1=e^{i0}$ and $-1=e^{i\pi}$.
Hence, $U=e^{iH}$ where the generator (Hamiltonian) $H$ of the measurement interaction is
$$
H=\pi\sum_{k,n\in\N\atop k\ne n}|\psi_{nk}^-\>\<\psi_{nk}^-|.
$$
By using the isometry $J$ of Remark \ref{remarkki1}, $Je_m=|m\>\>=|m,0\>\>=
|m\>\otimes|0\>\in\hi\otimes\hi$  when $m\ge 0$, one sees that
$$
(I_\hi\otimes J)\psi_{nk}^-=2^{-1/2}(|n\>\otimes Je_k-|k\>\otimes Je_n)
=\vp^-_{nk}\otimes|0\>\in\hi\otimes\hi\otimes\hi.
$$
where $\vp^-_{nk}=2^{-1/2}(|n\>\otimes |k\>-|k\>\otimes|n\>)$
so that 
$$
H_J:=
(I_\hi\otimes J)H(I_\hi\otimes J)^*
=\pi\sum_{k,n\in\N\atop k\ne n}|\vp_{nk}^-\>\<\vp_{nk}^-|\otimes|0\>\<0|.
$$
Since $J^*J=I_{\hi'}$,
$$
U_J:=(I_\hi\otimes J)U(I_\hi\otimes J)^*=(I_\hi\otimes J)e^{iH}(I_\hi\otimes J)^*=e^{iH_J}
$$ 
and $U_J$ just flips the vectors of the first two modes:
$$
U_J(\psi\otimes\vp\otimes|0\>)=\vp\otimes\psi\otimes|0\>,\qquad \psi,\,\vp\in\hi.
$$ 
However, it is still an open question how to measure the pointer observable
$$
J\P(\Theta)J^* = \sum_{m,n=-\infty}^\infty \, \frac{1}{2\pi}\int_\Theta e^{i\theta(m-n)} \, d\theta \, \vert m \>\>\<\< n\vert.
$$
We will come back to this problem in Remark \ref{remu2} below.

\section{Joint measurements including the canonical phase}

As we saw in the preceding section, to measure a phase observable $\E$, one needs a pointer observable $\P_a:\,\mathcal B(\Omega)\to\mathcal L(\hi_a)$, $\Omega\subseteq\R^n$, which satisfies the following requirements:
a) the Hilbert space $\hi_a$ of the apparatus must be `larger' than $L^2[0,2\pi)$, i.e.\ there exists an isometry $J:\,L^2[0,2\pi)\to\hi_a$, and
b) there exists a pointer function (measurable surjection) $f:\,\Omega\to[0,2\pi)$ such that
\begin{equation}\label{pointteri}
J^*\P_a(f^{-1}(\Theta))J\equiv\P(\Theta).
\end{equation}
For example, if one chooses $\Omega=\C\simeq\R^2$ and uses polar coordinates $(r,\theta)$ then the choices $\hi_a=L^2(\C)\simeq L^2(\R_+)\otimes L^2[0,2\pi)$,
 $f(r,\theta):=\theta$, $(\P_a(Z)\Psi)(z)=\chi_Z(z)\Psi(z)$, $Z\in\mathcal B(\C)$, $\Psi\in\hi_a$, $z\in\C$,  and $(Je_n)(r,\theta)=g(r)e_n(\theta)$, $\int_0^\infty|g(r)|^2dr=1$ satisfy \eqref{pointteri}.
Similarly, one can choose $\Omega=\R^3$ (or any $\R^n$) and change the cartesian coordinates to (generalized) spherical coordinates. Now one of the coordinates is $\theta\in[0,2\pi)$ and we may generalize easily the above construction.
Hence,
an interesting special case of \eqref{pointteri} is the case of $\Omega=\Omega'\times[0,2\pi)$ where $\Omega'\subseteq\R^{n-1}$ and $f(\omega,\theta)=\theta$ for all $\theta\in[0,2\pi)$ and $\omega\in\Omega'$.
Now $\P_a$ can be viewed as a joint measurement of the POVMs $\Theta\mapsto \P_a(\Omega'\times\Theta))$ and $\mathcal B(\Omega')\ni X\mapsto \P_a(X\times[0,2\pi))$.
Note that if $U$ is the measurement interaction and $\xi$ the initial pure state of the apparatus then
\begin{equation} \label{kulmakaavapointteri}
\E(\Theta)=V_\xi^*U^*(I_\hi\otimes\P_a(\Omega'\times\Theta))UV_\xi
\end{equation}
where $V_\xi:\,\hi\to\hi\otimes\hi_a$, $\psi\mapsto \psi\otimes\xi$, is an isometry.
By defining another POVM 
$$
\F(X):=V_\xi^*U^*(I_\hi\otimes\P_a(X\times[0,2\pi))UV_\xi
$$
we see that $\E$ and $\F$ are jointly measurable and $\P_a$ (resp.\ $U$) can be used as a pointer observable (resp.\ measurement interaction) also for $\F$.
Next we give the definition for jointly measurable observables and study the case of the canonical phase.

\begin{definition}
Let $\E_1:\h B(\Omega_1)\to \lh$ and $\E_2:\h B(\Omega_2) \to \lh$ be observables.  We say that $\E_1$ and $\E_2$ are {\em jointly measurable} if there exists an observable $\E:\h B(\Omega_1\times \Omega_2)\to\lh$   such that 
\begin{equation}
\E_1(X) = \E(X\times \Omega_2),\qquad \E_2(Y) = \E(\Omega_1 \times Y),
\end{equation}
for all $X\in\h B(\Omega_1)$ and $Y\in \h B(\Omega_2)$.
\end{definition}

It is a standard result that for sharp observables joint measurability is equivalent to commutativity. The same is true also in the case that one (but not both) of the observables is merely a POVM. 
In such a case, $\E(X\times Y)\equiv\E_1(X)\E_2(Y)\equiv\E_2(Y)\E_1(X)$.
We will see (Prop.\ \ref{prop:sharp_joint_measurement} below) that in the case of the canonical phase this rules out any joint measurements where the other observable is sharp. This is a consequence of the following lemma.

\begin{lemma}\label{lemma:commutator}
Let $A\in\lh$. Then $[A,\E_\mathrm{can}(\Theta)]=0$ for all $\Theta\in\h B([0,2\pi))$ if and only if $A=cI$ for some $c\in \C$.
\end{lemma}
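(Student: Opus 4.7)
The implication $A=cI \Rightarrow [A,\Ecan(\Theta)]=0$ is trivial, so the content is in the converse. My plan is to pass from commutation with the POVM $\Ecan$ to commutation with the unilateral shift $S:=\sum_{n\ge 0}|n+1\>\<n|$, and then invoke a standard commutant argument to conclude $A\in\C I$.

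First, suppose $[A,\Ecan(\Theta)]=0$ for every $\Theta\in\mathcal B([0,2\pi))$. Then for every $u,v\in\hi$ the complex Borel measures $\Theta\mapsto \<u|A\Ecan(\Theta)v\>$ and $\Theta\mapsto \<u|\Ecan(\Theta)Av\>$ coincide, so integrating any bounded measurable $f:[0,2\pi)\to\C$ against both yields
$$
A\,F_f = F_f\,A,\qquad F_f := \int_0^{2\pi} f(\theta)\,\Ecan(d\theta),
$$
where $F_f\in\lh$ is defined via the weak operator integral. Approximating $f$ uniformly by simple functions and using dominated convergence in the scalar measures $\mu_{u,v}(\Theta):=\<u|\Ecan(\Theta)v\>$ makes this a routine monotone-class argument: commutativity passes through the WOT-limit.

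Specializing to $f(\theta)=e^{ik\theta}$ with $k\in\Z$ and using $\<m|\Ecan(d\theta)|n\>=\frac{1}{2\pi}e^{i\theta(m-n)}d\theta$, I compute $\<m|F_k|n\>=\delta_{m+k,n}$. This identifies $F_k=(S^*)^k$ for $k\ge 0$ and $F_{-k}=S^k$ for $k\ge 0$. In particular $A$ commutes with both $S$ and $S^*$.

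Finally, I invoke $S^*S=I$ and $SS^*=I-|0\>\<0|$. Commutativity of $A$ with $SS^*$ gives $[A,|0\>\<0|]=0$, which forces $A|0\>=\lambda|0\>$ with $\lambda:=\<0|A|0\>$ (the range of the rank-one projection is preserved). Iterating $AS=SA$ then yields $A|n\>=AS^n|0\>=S^nA|0\>=\lambda|n\>$ for every $n\in\N$, hence $A=\lambda I$. The only step requiring any real care is the justification of $AF_f=F_fA$ for arbitrary bounded $f$; once this is in hand, the rest is an elementary matrix-element computation plus the textbook fact that $\{S,S^*\}'=\C I$ on $\hi$.
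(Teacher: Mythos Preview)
Your proof is correct. Both you and the paper begin identically: integrate $e^{ik\theta}$ against $\Ecan$ to obtain the operators $F_k=B_k=\sum_{m\ge\max\{0,-k\}}|m\rangle\langle m+k|$, i.e.\ the powers of the unilateral shift and its adjoint, and deduce that $A$ commutes with all of them. The divergence is in the endgame. The paper first splits $A=A_1+iA_2$ into selfadjoint parts (so that $\langle k|A_j|l\rangle=0$ for $k>l$ forces diagonality by selfadjointness), then uses two separate matrix-element identities $\langle 0|[A_j,B_k]|l\rangle=0$ and $\langle 0|[A_j,B_k]|k\rangle=0$ to pin down the diagonal. You instead observe $SS^*=I-|0\rangle\langle 0|$, conclude $A|0\rangle=\lambda|0\rangle$ directly, and propagate via $AS^n=S^nA$. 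Your route is slightly cleaner: it avoids the selfadjoint decomposition entirely and packages the conclusion as the standard fact $\{S,S^*\}'=\C I$. The paper's route is a bit more hands-on but equally short. Your passage from $[A,\Ecan(\Theta)]=0$ to $[A,F_f]=0$ is also fine; in fact no approximation is needed, since equality of the complex measures $\langle u|A\Ecan(\cdot)v\rangle$ and $\langle u|\Ecan(\cdot)Av\rangle$ immediately gives equality of their integrals against any bounded measurable $f$.
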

\begin{proof}
First notice that $A$ can be decomposed as $A= A_1 +  i A_2$ where $A_j\in\lh$ are selfadjoint; for example, $A_1=(A+A^*)/2$. Therefore, if $A$ commutes with all $\E_\mathrm{can}(X)$, then by $[A^*,\E_\mathrm{can}(X)]= -[A,\E_\mathrm{can}(X)]^*$, so does $A^*$ and it follows that both $A_j$:s also commute with all $\E_\mathrm{can}(X)$.  Now for any $k\in\Z$ define the operator 
$$
B_k= \frac{1}{2\pi} \int e^{ik\theta} \, d\E_\mathrm{can} (\theta) = \sum_{m,n=0}^\infty \frac{1}{2\pi}\int e^{i(k+m-n)\theta} \, d\theta \, \vert m\rangle\langle n\vert = \sum_{m = \max \{ 0,-k\}}^\infty \vert m \rangle\langle m+k\vert. 
$$
If $A_j$ commutes with $\E_\mathrm{can}$, it also commutes with all $B_k$:s. Thus, we have $\langle n \vert [A_j,B_k] \vert l\rangle = 0$ for all $n,l\in\N$ and $k\in\Z$. In particular, if $k>l$ then 
$$
\langle 0 \vert [A_j,B_k]\vert l \rangle = - \langle k \vert A_j\vert l \rangle  = 0
$$
so that $A_j$ must be diagonal. (Recall that $A_j$ is  selfadjoint.) But then we also have
$$
\langle 0 \vert [A_j,B_k]\vert k \rangle  = \langle 0 \vert A_j \vert 0\rangle  - \langle k\vert A_j\vert k \rangle = 0
$$
for all $k\geq 0 $ so that $A_j=c_jI$ for some $c_j\in\R$. Thus, $A= cI$ where $c=c_1 + ic_2\in\C$.
\end{proof}

\begin{proposition}\label{prop:sharp_joint_measurement}
The canonical phase observable is not jointly measurable with any nontrivial sharp observable.
\end{proposition}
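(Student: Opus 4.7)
The plan is to reduce the claim directly to Lemma \ref{lemma:commutator}, using the commutativity characterisation of joint measurability already cited in the text. Specifically, when one of two jointly measurable observables is sharp, joint measurability is equivalent to commutativity of all pairs of effects. So if a sharp observable $\F:\h B(\Omega)\to\lh$ were jointly measurable with $\Ecan$, then for every $Y\in\h B(\Omega)$ and every $\Theta\in\h B([0,2\pi))$ we would have $[\F(Y),\Ecan(\Theta)]=0$.

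Fix $Y$ and apply Lemma \ref{lemma:commutator} to the bounded operator $A=\F(Y)$. The lemma forces $\F(Y)=c_Y I$ for some $c_Y\in\C$. Since $\F$ is projection valued, $\F(Y)^2=\F(Y)$, hence $c_Y^2=c_Y$, so $c_Y\in\{0,1\}$. Therefore $\F(Y)\in\{0,I\}$ for every $Y\in\h B(\Omega)$, i.e.\ $\F$ is trivial: the probability $\tr{\rho\F(Y)}$ does not depend on the state $\rho$. Contrapositively, no nontrivial sharp observable can be jointly measurable with $\Ecan$.

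There is essentially no obstacle, as Lemma \ref{lemma:commutator} performs all of the work; the only thing worth stating clearly is what \emph{nontrivial} means in this context, namely that $\F$ has at least one effect different from $0$ and $I$. One may also want to remark why the commutativity characterisation applies: if $\E:\h B([0,2\pi)\times\Omega)\to\lh$ is a joint observable with margins $\Ecan$ and $\F$, then sharpness of $\F$ together with the general fact that a projection in the commutant of a POVM commutes with every effect of that POVM yields $\E(\Theta\times Y)=\Ecan(\Theta)\F(Y)=\F(Y)\Ecan(\Theta)$, which is exactly the commutativity needed to invoke the lemma.
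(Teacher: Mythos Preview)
Your proof is correct and follows essentially the same route as the paper: invoke the fact that joint measurability with a sharp observable forces commutativity (the paper cites \cite[Theorem 1.3.1]{Ludwig}), apply Lemma~\ref{lemma:commutator} to obtain $\F(Y)=c_Y I$, and then use projection-valuedness to conclude $c_Y\in\{0,1\}$, i.e.\ $\F$ is trivial. Your additional remarks on the meaning of ``nontrivial'' and on the product form of the joint observable are helpful but not strictly needed.
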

\begin{proof}
If, say, $\F:\h B(\Omega)\to \lh$ is a sharp observable which is jointly measurable with the canonical phase $\E_\mathrm{can}$, then $\F$ and $\E_\mathrm{can}$ must commute according to \cite[Theorem 1.3.1]{Ludwig}.  By Lemma \ref{lemma:commutator} (and the fact that $\F$ is an observable), for each $X\in \h B(\Omega)$ there exists a $c_X\in[0,1]$ such that $\F(X) = c_X I$.  Thus, $\F$ must be trivial. (Moreover, since $\F$ is projection valued, one must have $(c_X)^2=c_X$ so that $X\mapsto c_X\in\{0,1\}$ is an extremal probability measure which is usually a Dirac measure).
\end{proof}

The next proposition shows that even if we allow the other observable to be a POVM, we still do not gain anything by performing a joint measurement. Indeed, we see that the other observable is necessarily a post-processing of $\E_\mathrm{can}$ and therefore it does not give any more information than the measurement of merely $\E_{\mathrm{can}}$.

\begin{proposition}
Let $\F:\h B(\Omega)\to\lh$ be an observable. Then $\F$ is jointly measurable with the canonical phase observable $\E_\mathrm{can}$ if and only if there exists a (weak) Markov kernel $m: \, \h B(\Omega)\times[0,2\pi) \to [0,1]$ such that  
\begin{equation}\label{eqn:post_proc}
\F(X) = \int_0^{2\pi} m(X,\theta)\, d \E_{\mathrm{can}} (\theta)
\end{equation}  
for all $X\in \h B(\Omega)$. In such a case, the joint observable $\mathsf{M}: \h B(\Omega \times [0,2 \pi))\to\lh$ is unique and is given by 
\begin{equation}\label{eqn:joint_observable}
\mathsf{M}(X\times \Theta) = \sum_{m,n=0}^\infty \, \frac{1}{2\pi}\int_\Theta m(X,\theta)\, e^{i\theta(m-n)} \, d\theta \, \vert m \rangle\langle n\vert
\end{equation}
for all $X\in \h B(\Omega)$ and $\Theta\in\h B([0,2\pi))$.
\end{proposition}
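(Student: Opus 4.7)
The plan is to prove both implications separately, and to extract the Markov kernel from a Radon--Nikodym-type decomposition of the joint observable.

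For the easy direction, given a weak Markov kernel $m$, I would simply define $\mathsf{M}$ by the formula \eqref{eqn:joint_observable} and check directly that it is a POVM with the claimed marginals. Positivity of each $\mathsf{M}(X\times\Theta)$ reduces to non-negativity of the quadratic form
$$
\sum_{j,k}\overline{\alpha_j}\alpha_k\,\frac{1}{2\pi}\int_\Theta m(X,\theta)\,e^{i\theta(j-k)}\,d\theta = \int_\Theta m(X,\theta)\,\Bigl|\sum_j \alpha_j e^{-ij\theta}\Bigr|^2\frac{d\theta}{2\pi}
$$
on finitely supported sequences $(\alpha_j)$, which is obvious since $m(X,\cdot)\geq 0$. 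Normalization, $\sigma$-additivity, and the marginals $\F(X)=\int m(X,\theta)\,d\Ecan(\theta)$ and $\Ecan(\Theta)$ are then routine.

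For the converse, let $\mathsf{M}$ be any joint observable. Positivity together with $\mathsf{M}(X\times\Theta)+\mathsf{M}((\Omega\setminus X)\times\Theta)=\Ecan(\Theta)$ gives $0\leq\mathsf{M}(X\times\Theta)\leq\Ecan(\Theta)$. I would first argue that every matrix element $\<j|\mathsf{M}(X\times\cdot)|k\>$ is absolutely continuous with respect to Lebesgue measure on $[0,2\pi)$: for $j=k$ this is immediate since $\<j|\Ecan(\Theta)|j\>=|\Theta|/(2\pi)$, and for $j\neq k$ it follows from the inequality $|\<j|A|k\>|^2\leq\<j|A|j\>\<k|A|k\>$ valid for any positive $A$. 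Radon--Nikodym then yields measurable functions $c_{jk}(X,\cdot)$ with
$$
\<j|\mathsf{M}(X\times\Theta)|k\>=\frac{1}{2\pi}\int_\Theta c_{jk}(X,\theta)\,e^{i\theta(j-k)}\,d\theta,
$$
where $c_{jk}(\Omega,\theta)=1$ a.e.\ by the $\Ecan$-marginal condition.

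The heart of the argument is to show that $c_{jk}(X,\theta)$ is independent of $j,k$. Fix any $N\in\N$ and truncate to indices $j,k\leq N$. By Lebesgue differentiation applied to the finite matrix of measures $\<j|\mathsf{M}(X\times\Theta)|k\>$, for a.e.\ $\theta$ the $(N{+}1){\times}(N{+}1)$ matrix whose $(j,k)$ entry is $c_{jk}(X,\theta)e^{i\theta(j-k)}$ is positive semidefinite; substituting $\beta_n:=\alpha_n e^{-in\theta}$ converts this into positivity of $(c_{jk}(X,\theta))_{j,k\leq N}$ itself. The same holds for $\Omega\setminus X$, and the two truncated matrices sum to the rank-one all-ones matrix $\mathbf{1}\mathbf{1}^T$. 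A standard range-inclusion lemma (if $0\leq A\leq vv^*$ then $A=\lambda vv^*$ for some $\lambda\geq 0$) forces each summand to be a scalar multiple of $\mathbf{1}\mathbf{1}^T$, so $c_{jk}(X,\theta)=m(X,\theta)$ for all $j,k\leq N$ and some $m(X,\theta)\in[0,1]$. Since the value is consistent as $N$ grows, $m(X,\theta)$ is well defined, and substituting back produces both the joint-observable formula \eqref{eqn:joint_observable} (giving uniqueness) and the post-processing formula \eqref{eqn:post_proc} when $\Theta=[0,2\pi)$.

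Finally I would verify the weak Markov kernel properties of $m$: measurability of $\theta\mapsto m(X,\theta)$ comes from Radon--Nikodym, while $m(\Omega,\theta)=1$ and finite additivity in $X$ hold off a null set. Countable additivity holds for each individual countable partition up to a partition-dependent null set, which is precisely why one obtains only a \emph{weak} Markov kernel. The main obstacle is the rank-one argument for the infinite index set together with the measurability bookkeeping; both are handled cleanly by truncating to finite $N$ and combining Lebesgue differentiation with the elementary lemma on positive matrices dominated by a rank-one matrix.
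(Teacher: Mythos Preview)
Your argument is correct, but it follows a genuinely different route from the paper's. The paper works in the minimal Naimark dilation $(\h K,\sfq,V)$ of $\Ecan$ with $\h K=L^2([0,2\pi))$: from $\mathsf M(X\times\Theta)\le\Ecan(\Theta)$ it invokes a dilation lemma to write $\mathsf M(X\times\Theta)=V^*\mathsf R(X)\sfq(\Theta)V$ with $\mathsf R$ commuting with $\sfq$, and then uses that $\sfq$ generates a maximal abelian von Neumann algebra to conclude that $\mathsf R(X)$ is multiplication by some $m(X,\cdot)$. Uniqueness is deduced separately from the extremality of $\Ecan$. Your proof replaces this operator-algebraic machinery by elementary real analysis: Radon--Nikodym on matrix elements, Lebesgue differentiation to recover pointwise positivity, and the rank-one domination lemma $0\le A\le vv^*\Rightarrow A=\lambda vv^*$ in place of the maximal-abelian/commutant argument. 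Your approach is more self-contained and yields uniqueness directly (the Fourier coefficients of $m(X,\cdot)$ are read off from $\F(X)$), whereas the paper's dilation argument is more conceptual and would transfer verbatim to any observable whose minimal Naimark dilation has a maximal pointer. One small point: in the easy direction you should say a word about why the biobservable on rectangles extends to a POVM on $\h B(\Omega\times[0,2\pi))$; the paper handles this by noting that the multiplication operators $\mathsf R(X)$ form a POVM commuting with the PVM $\sfq$, so the product extends.
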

\begin{proof}
Assume first that $\F$ and $\E_\mathrm{can}$ are jointly measurable, with the joint observable $\mathsf{M}$, i.e., $\mathsf{M}_1 = \F$ and $\mathsf{M}_2=\E_\mathrm{can}$. Consider the minimal Naimark dilation $(\h K, \mathsf{Q}, V)$ of  $\E_\mathrm{can}$ where $\h K = L^2 ([0,2\pi))$, $\mathsf{Q}$ is the canonical spectral measure on $\h K$, and  $V:\hil\to\h K$ is the isometry $V\vert n\rangle =e_n$ where $e_n(\theta) = \frac{1}{\sqrt{2\pi}} e^{-in\theta}$.  Since $\mathsf{M}(X\times \Theta)\leq \mathsf{M} (\Omega\times \Theta)= \E_\mathrm{can}(\Theta)$ for all $X\in\h B(\Omega)$ and $\Theta\in\h B([0,2\pi))$, there exists an observable $\mathsf{R}: \h B(\Omega)\to \h L(\h K)$  which commutes with $\mathsf{Q}$ and  
\begin{equation}\label{eqn:joint_dilation}
\mathsf{M}(X\times \Theta)  =  V^* \mathsf{R}(X) \mathsf{Q}(\Theta) V
\end{equation}
for all  $X\in \h B(\Omega)$ and $\Theta\in\h B([0,2\pi))$ \cite[Lemma 4.1]{HaHePe}. But the canonical spectral measure $\mathsf{Q}$ is maximal in the sense that any operator which commutes with $\mathsf{Q}$ must be a function of it \cite[Theorem 1, p.\ 187]{Dix}.
 Indeed, for a fixed $X\in \h B(\Omega)$ and for any $\Theta\in\h B([0,2\pi))$ we have 
$$
\left[\mathsf{R}(X)\chi_\Theta\right](\theta) = \left[\mathsf{R}(X) \mathsf{Q}(\Theta) 1\right](\theta) = \left[\mathsf{Q}(\Theta)\mathsf{R}(X) 1\right](\theta) = \left[\mathsf{R}(X) 1\right](\theta)\chi_\Theta (\theta)
$$
for almost all $\theta\in[0,2\pi)$ where $1$ denotes the constant function $\theta\mapsto 1$. Since the linear combinations of characteristic functions are dense in $\h K$, we conclude that $\mathsf{R}(X)$ corresponds to multiplication by the function $\mathsf{R}(X) 1$. By denoting $m(X,\theta)=(\mathsf{R}(X)1)(\theta)$ we obtain a (weak) Markov kernel $m:\h B(\Omega)\times [0,2\pi) \to [0,1]$, and it follows from Eq.\ \eqref{eqn:joint_dilation} that 
$$
\F(\Theta) = V^* \mathsf{R}(X) V = \int_0^{2\pi} m(X,\theta) \, d\E_\mathrm{can}(\theta). 
$$

Suppose now that there exists a (weak) Markov kernel such that Eq. \eqref{eqn:post_proc} holds. Since for each $X\in\h B(\Omega)$ the map $\theta\mapsto m(X,\theta)$ is measurable and $m(X,\theta)\leq 1$ for almost all $\theta$, we may define a bounded operator $\mathsf{R}(X)\in\h L(\h K)$ by $(\mathsf{R}(X)\varphi)(\theta) = m(X,\theta) \varphi (\theta)$. The map $X\mapsto \mathsf{R}(X)$ is then a POVM which commutes with $\sfq$. Therefore the map $(X,\Theta)\mapsto V^* \mathsf{R}(X) \mathsf{Q}(\Theta) V$ extends to an observable $\mathsf{M}:\h B(\Omega \times [0,2\pi)) \to\lh$ whose margins are $\mathsf{F}$ and $\E_\mathrm{can}$. In other words, $\mathsf{F}$ and $\E_\mathrm{can}$ are jointly measurable. 

In both of the above instances the joint observable $\mathsf{M}$ satisfies Eq.\ \eqref{eqn:joint_observable}, and the uniqueness follows from \cite[Theorem 4.1(a)]{HaHePe} since $\E_\mathrm{can}$ is an extreme point of the convex set of all observables on $\h B([0,2\pi))$ \cite{HePe}.
\end{proof}

Note that, by combining the preceding two propositions, we see that $\mathsf F$ is projection valued if and only if the (weak) Markov kernel $m(X,\theta)=c_X=(c_X)^2$ (almost all $\theta$), i.e.\ it does not depend on $\theta$  and $c_X\in\{0,1\}$.

\section{Phase shift covariant phase space observables}\label{sec:PSCPSO}
Even though we have already seen that no additional information can be obtained by measuring some other observable together with the canonical phase, there is a strong practical motivation for studying such joint measurement. For instance, the  realization of the measurement of the canonical spectral measure $\sfq:\h B([0,2\pi))\to L^2 ([0,2\pi))$ may be problematic, but things get much simpler if we instead consider $\sfq$ as the angle margin of the canonical spectral measure $\mathsf{M}:\h B(\C)\to L^2(\C)\simeq  L^2(\R_+) \otimes L^2 ([0,2\pi))$, i.e., we interpret $\sfq $ as the observable $X\mapsto I\otimes \sfq (X)$ on the larger Hilbert space. In this case $L^2(\C)$ may be simply realized as the Hilbert space of a two-mode field, whereas $\mathsf{M}$, and therefore $\sfq$, can be measured by performing homodyne detection on the two modes.

For any observable $\mathsf{P}:\mathcal{B}(\C)\to\lh$ we define the angle and radial margins $\mathsf{P}_\mathrm{angle}:\mathcal{B}([0,2\pi))\to\lh$ and $\mathsf{P}_\mathrm{rad}:\mathcal{B}(\R_+)\to\lh$ via 
\begin{equation}\label{eqn:margins}
\mathsf{P}_\mathrm{angle}(\Theta) = \mathsf{P}(\R_+\times \Theta),\qquad \mathsf{P}_\mathrm{rad}(X) = \mathsf{P}(X\times[0,2\pi)).
\end{equation}
Any phase observable $\E$ can obviously be obtained as the angle margin of a phase space observable in a trivial manner. Namely, for any probability measure $\mu:\mathcal{B}(\R_+)\to[0,1]$, the observable $\mathsf{P}:\mathcal{B}(\C)\to\lh$ for which $\mathsf{P}(X\times \Theta)= \mu(X)\E(\Theta)$ is a suitable choice. This observable has the additional property that it also transforms covariantly under the action of the phase shifter, i.e., 
$$
e^{i\theta N}\mathsf P(X\times\Theta)e^{-i\theta N}= \mu(X)e^{i\theta N}\E(\Theta) e^{-i\theta N} =  \mu(X)\E(\Theta\dot{+} \theta)= \mathsf P\big(X\times(\Theta\dot{+}\theta)
\big).
$$
We will now focus our attention on phase space observables which have this symmetry property.

\begin{definition}\label{def:PSCPSO}
A {\em phase shift covariant phase space observable} is  a POVM $\mathsf P:\,\mathcal B(\mathbb C)\to\mathcal L(\mathcal H)$ such that 
$$
e^{i\theta N}\mathsf P(X\times\Theta)e^{-i\theta N}=\mathsf P\big(X\times(\Theta\dot{+}\theta)
\big)
$$
for all $X\in\mathcal B(\mathbb R_+)$, $\Theta\in\mathcal B([0,2\pi))$, and $\theta\in [0,2\pi)$.
\end{definition}

The following structure theorem for phase shift covariant phase space observables can then be proved.

\begin{proposition}\label{prop:structure}
An observable $\mathsf P:\,\mathcal B(\mathbb C)\to\mathcal L(\mathcal H)$ is a phase shift covariant phase space observable if and only if there exists a probability measure $\mu:\,\mathcal B(\mathbb R_+)\to[0,1]$ and (a weakly $\mu$--measurable field of) vectors 
$\eta_m(x)\in\mathcal H$ satisfying the condition
$$
\int_0^\infty\|\eta_m(x)\|^2 d\mu(x)=1
$$
such that 
\begin{equation}\label{eqn:structure}
\mathsf P(X\times\Theta) = \sum_{m,n=0}^\infty\frac{1}{2\pi}\int_{X\times\Theta} \langle e^{-im\theta}\eta_m(x)|e^{-in\theta}\eta_n(x)\rangle d\theta d\mu(x)|m\rangle\langle n|
\end{equation}
for all $X\in\mathcal B(\mathbb R_+)$ and $\Theta\in\mathcal B([0,2\pi))$.
\end{proposition}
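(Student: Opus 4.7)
The plan is to reduce to the structure theorem for covariant phase observables (Theorem \ref{thr:structure_of_phase_observables}) by treating $\mathsf P$ as a family of operator-valued measures parametrized by the radial variable, then to perform a Radon--Nikodym decomposition followed by a measurable pointwise Kolmogorov factorization. For the forward direction, I first fix $X \in \mathcal B(\R_+)$ and observe that $\Theta \mapsto \mathsf P(X\times\Theta)$ is a positive, $\sigma$-additive operator-valued measure on $[0,2\pi)$ satisfying the covariance relation \eqref{eqn:covariant_phase_observable}, though generally not normalized. Inspection of the proof of Theorem \ref{thr:structure_of_phase_observables} shows that normalization is only used to force $c_{mm}\equiv 1$; the same argument yields a unique positive semidefinite matrix $(a_{mn}(X))_{m,n\in\N}$ with
\begin{equation*}
\mathsf P(X\times\Theta) = \sum_{m,n=0}^\infty a_{mn}(X)\,\frac{1}{2\pi}\int_\Theta e^{i\theta(m-n)}\,d\theta\,|m\rangle\langle n|.
\end{equation*}
Each $a_{mm}(\cdot)$ is a positive finite Borel measure with total mass $1$, and Cauchy--Schwarz gives $|a_{mn}(X)|^2\le a_{mm}(X)a_{nn}(X)\le 1$.

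Next, I introduce the dominating probability measure $\mu(X) := \sum_{m=0}^\infty 2^{-m-1}a_{mm}(X)$ and appeal to Radon--Nikodym to extract densities $d_{mn}\in L^1(\mu)$ with $a_{mn}(X) = \int_X d_{mn}(x)\,d\mu(x)$. A standard countable-coefficient argument shows that $(d_{mn}(x))_{m,n}$ is positive semidefinite for $\mu$-almost every $x$. The key technical step is then to realize a measurable Kolmogorov decomposition $d_{mn}(x) = \langle \eta_m(x)|\eta_n(x)\rangle$ with $\eta_m(\cdot)$ weakly $\mu$-measurable into $\hil$: fix once and for all an isometric embedding of $\ell^2(\N)$ into $\hil$ and define $\eta_m(x)$ from the columns of the (uniformly constructed) positive square roots of the principal submatrices $(d_{ij}(x))_{i,j\le N}$, passing to the limit in $N$. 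The normalization $\int \|\eta_m(x)\|^2\,d\mu(x) = a_{mm}(\R_+) = 1$ is then automatic, and using the identity $\langle e^{-im\theta}\eta_m(x)|e^{-in\theta}\eta_n(x)\rangle = e^{i(m-n)\theta}\langle \eta_m(x)|\eta_n(x)\rangle$, the claimed formula \eqref{eqn:structure} follows.

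For the converse, given $\mu$ and a field $\{\eta_m(x)\}$ as in the statement, I would verify directly that \eqref{eqn:structure} defines a POVM: positivity of finite number-basis truncations reduces to positivity of Gram matrices, $\sigma$-additivity follows by dominated convergence, and $\mathsf P(\C)=I$ follows from the assumed $L^1(\mu)$-normalization of $\|\eta_m(x)\|^2$ together with the orthogonality $\frac{1}{2\pi}\int_0^{2\pi}e^{i(m-n)\theta}\,d\theta = \delta_{mn}$. Covariance is immediate from the shift invariance of Lebesgue measure on $[0,2\pi)$. I expect the main obstacle to be the measurable Kolmogorov factorization in the forward direction: since the minimal ancillary space of the Gram matrix $(d_{mn}(x))$ can depend on $x$, one cannot simply take pointwise minimal decompositions, and some care is required to produce a jointly measurable field $x\mapsto \eta_m(x)\in\hil$ that realizes all Gram constraints simultaneously for every pair $(m,n)$ while respecting the embedding into the fixed Hilbert space $\hil$.
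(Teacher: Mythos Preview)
Your proposal is correct and follows essentially the same route as the paper: fix $X$, invoke the covariant-phase structure theorem to obtain a positive semidefinite matrix $(c_{mn}(X))$, dominate the resulting complex measures by $\mu=\sum_n\lambda_n c_{nn}$, and then perform a Radon--Nikodym plus measurable Kolmogorov factorization. The only difference is that the paper dispatches the last step---precisely the measurable pointwise Gram decomposition you flag as the main obstacle---by citing \cite{HyPeYl}, whereas you sketch a direct construction via limits of square roots of principal minors; the converse is likewise dismissed in the paper as ``clearly true.''
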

\begin{proof}
Consider a fixed set $X\in\mathcal B(\mathbb R_+)$. Since $e^{i\theta N}\mathsf P(X\times\Theta)e^{-i\theta N}=\mathsf P\big(X\times(\Theta\dot{+}\theta)
\big)$ for all $\Theta\in\mathcal B([0,2\pi))$, following the proof of the structure theorem for covariant phase observables \cite{LaPe99},  there exists a positive semidefinite matrix $\big(c_{mn}(X)\big)_{m,n\in\N}$ such that 
$$
\mathsf P(X\times\Theta)=
\sum_{m,n=0}^\infty c_{mn}(X) \, \frac{1}{2\pi}\int_\Theta
e^{i(m-n)\theta}d\theta 
|m\rangle\langle n|
$$
Now, for all $m,n\in\N$ the map $X\mapsto c_{mn}(X)$ is a complex measure 
which is clearly absolutely continuous with respect to the probability measure
$$
X\mapsto\mu(X)=\mathsf \sum_{n=0}^\infty\lambda_n \big\langle n\big|\mathsf P\big(X\times[0,2\pi)\big)\big|n\big\rangle
=\sum_{n=0}^\infty\lambda_nc_{nn}(X)
$$
where $\lambda_n>0$ for all $n$ and $\sum_n\lambda_n=1$. It follows from \cite{HyPeYl} that there exist vectors  $\eta_m(x)\in\mathcal H$ such that $c_{mn}(X)=\int_X\langle\eta_m(x)|\eta_n(x)\rangle d\mu(x)$. Hence, Eq.\ \eqref{eqn:structure} holds. The converse claim is clearly true.
\end{proof}

Using Prop.\ \ref{prop:structure} we can now determine the angle and radial margins of any phase shift covariant phase space observable. Indeed, by setting $X = \R_+$ in Eq.\ \eqref{eqn:structure} we obtain
\begin{equation}\label{eqn:angle_margin}
\mathsf{P}_\textrm{angle} ( \Theta)  = \sum_{m,n=0}^\infty c_{mn} \, \frac{1}{2\pi} \int_\Theta e^{i\theta (m-n)} \, d\theta \vert m\rangle\langle n\vert 
\end{equation}
where the phase matrix elements are given by 
$$
c_{mn} = \int_0^\infty\langle\eta_m(x)|\eta_n(x)\rangle d\mu(x).
$$ 
Similarly, the radial margin is seen to be 
\begin{equation}\label{eqn:radial_margin}
\mathsf{P}_\textrm{rad} (X) = \sum_{m=0}^\infty c_{mm}(X) \vert m\rangle \langle m \vert.
\end{equation}
The map $(X,m)\mapsto c_{mm}(X)$ is a Markov kernel  on $ \mathcal{B}(\R_+)\times \N$, and the observable $\mathsf{P}_\textrm{rad}$ can therefore be viewed as a smeared number observable. More generally, we say that an observable $\F: \h B(\Omega)\to\lh$ is a smeared number observable if there exists a  Markov kernel $m:\h B(\Omega)\times \N\to [0,1]$ such that 
$$
\F(X)  = \sum_{n=0}^\infty m(X,n) \vert n \rangle \langle n\vert
$$ 
for all $X\in\h B(\Omega)$.

\begin{proposition}\label{prop:joint_observable}
Let $\mathsf{F}:\mathcal{B}(\R_+)\to\lh$ be a smeared number observable  and let $\mathsf{E}:\mathcal{B}(\R_+)\to\lh$ be a covariant phase observable. 
 Then $\mathsf{E}$ and $\mathsf{F}$ have a joint observable if and only if they have a joint observable which is a phase shift covariant phase space observable.
\end{proposition}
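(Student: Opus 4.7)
The \emph{if} direction is trivial: a phase shift covariant phase space observable whose margins are $\mathsf F$ and $\mathsf E$ is in particular a joint observable. For the \emph{only if} direction, my plan is to take an arbitrary joint observable and symmetrize it with respect to the phase shift group. The key reason this will succeed is that a smeared number observable is diagonal in the number basis, hence commutes with $N$, so the averaging preserves the radial margin; without this property the symmetrization would destroy $\mathsf F$.

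Concretely, suppose $\mathsf G\colon\mathcal B(\R_+\times [0,2\pi))\to\lh$ is a joint observable with $\mathsf G(X\times[0,2\pi))=\mathsf F(X)$ and $\mathsf G(\R_+\times\Theta)=\mathsf E(\Theta)$. For $\phi\in[0,2\pi)$ let $\sigma_\phi\colon\R_+\times[0,2\pi)\to\R_+\times[0,2\pi)$ be the rotation $\sigma_\phi(x,\theta)=(x,\theta\,\dot+\,\phi)$, and define
$$
\tilde{\mathsf G}(Z)=\frac{1}{2\pi}\int_0^{2\pi} e^{-i\phi N}\,\mathsf G(\sigma_\phi(Z))\,e^{i\phi N}\,d\phi,\qquad Z\in\mathcal B(\R_+\times[0,2\pi)).
$$
Each integrand is a positive operator bounded by $I$, so $\tilde{\mathsf G}(Z)\ge 0$; normalization $\tilde{\mathsf G}(\R_+\times[0,2\pi))=I$ is immediate; and $\sigma$-additivity follows from that of $\mathsf G$ together with a standard monotone/dominated convergence argument inside the Bochner integral. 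Hence $\tilde{\mathsf G}$ is a POVM.

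Next I would verify the two margins. For the angle margin, the covariance \eqref{eqn:covariant_phase_observable} of $\mathsf E$ gives $e^{-i\phi N}\mathsf E(\Theta\,\dot+\,\phi)e^{i\phi N}=\mathsf E(\Theta)$, so
$$
\tilde{\mathsf G}(\R_+\times\Theta)=\frac{1}{2\pi}\int_0^{2\pi}e^{-i\phi N}\mathsf E(\Theta\,\dot+\,\phi)e^{i\phi N}\,d\phi=\mathsf E(\Theta).
$$
For the radial margin, I invoke the hypothesis that $\mathsf F$ is a smeared number observable: $\mathsf F(X)=\sum_n m(X,n)|n\rangle\langle n|$ commutes with $e^{i\phi N}$, whence
$$
\tilde{\mathsf G}(X\times[0,2\pi))=\frac{1}{2\pi}\int_0^{2\pi}e^{-i\phi N}\mathsf F(X)e^{i\phi N}\,d\phi=\mathsf F(X).
$$
This is the only step where the specific structure of $\mathsf F$ is used, and it is also the conceptual crux: a generic radial margin would be averaged into its phase-shift orbit.

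Finally, phase shift covariance of $\tilde{\mathsf G}$ in the sense of Definition \ref{def:PSCPSO} follows by the change of variable $\phi\mapsto\phi-\theta$ in the defining integral:
$$
e^{i\theta N}\tilde{\mathsf G}(X\times\Theta)e^{-i\theta N}=\frac{1}{2\pi}\int_0^{2\pi}e^{-i(\phi-\theta)N}\mathsf G\big(X\times(\Theta\,\dot+\,\phi)\big)e^{i(\phi-\theta)N}\,d\phi=\tilde{\mathsf G}\big(X\times(\Theta\,\dot+\,\theta)\big).
$$
Thus $\tilde{\mathsf G}$ is a phase shift covariant phase space observable with the prescribed margins. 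The main (mild) obstacle is bookkeeping with the Borel structure of $\R_+\times[0,2\pi)\simeq\C\setminus\{0\}$ and justifying measurability/integrability of the operator-valued integrand, but this is routine since $\phi\mapsto e^{-i\phi N}\mathsf G(\sigma_\phi(Z))e^{i\phi N}$ is weakly measurable and uniformly bounded.
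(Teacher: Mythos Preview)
Your proof is correct and follows essentially the same strategy as the paper: symmetrize an arbitrary joint observable over the phase shift group and check that the margins survive. The only cosmetic difference is that the paper defines the averaged object first as a \emph{biobservable} on product sets $X\times\Theta$ and then invokes an extension theorem (Lahti--Ylinen) to obtain a POVM on all of $\mathcal B(\C)$, whereas you define $\tilde{\mathsf G}$ directly on arbitrary Borel sets and verify $\sigma$-additivity by hand; both routes lead to the same covariant joint observable.
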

\begin{proof}
Assume that $\mathsf{E}$ and $\mathsf{F}$ have a joint observable $\mathsf{M}:\mathcal{B}(\C)\to\lh$ and define the biobservable $\mathsf{P}':\mathcal{B}(\R_+)\times \mathcal{B}([0,2\pi))\to\lh$ via  
\begin{equation}\label{eqn:biobservable}
\mathsf{P}'(X,\Theta) = \frac{1}{2\pi} \int e^{-i\theta N} \mathsf{M} (X\times ( \Theta \dot{+} \theta)) e^{i\theta N} \, d\theta 
\end{equation}
where the integral is understood in the weak$^*$-sense. By \cite[Theorem 6.1.5]{LaYl2004}, $\mathsf{P}'$ extends to a unique observable $\mathsf{P}:\mathcal{B}(\C)\to\lh$ which is clearly phase shift covariant. Setting $\Theta = [0,2\pi)$ (resp. $X=\R_+$) in Eq.\ \eqref{eqn:biobservable} and using the phase shift invariance of $\mathsf{F}$ (resp.\ phase shift covariance of $\mathsf{E}$) we then see that $\mathsf{P}_\textrm{rad} = \mathsf{F}$ (resp.\ $\mathsf{P}_\textrm{angle} = \mathsf{E}$). Hence, $\mathsf{E}$ and $\mathsf{F}$ have a joint observable which is a phase shift covariant phase space observable. The converse statement is trivial.
\end{proof}

We immediately see that the canonical phase observable is obtained as the angle margin of a phase shift covariant phase space observable if and only if 
$$
\langle \eta_m | \eta_n \rangle := \int_0^\infty\langle\eta_m(x)|\eta_n(x)\rangle d\mu(x)  = 1
$$
for all $m,n\in\N$. In particular, we must have $\langle \eta_0 | \eta_n \rangle =1$ for all $n\in\N$ and since $\eta_n$ is a unit vector, this implies that $\eta_n = \eta_0$ for all $n\in\N$. But in such a case the radial margin is the trivial observable $\mathsf{P}_\textrm{rad} (X) = c_{00} (X) I $. A similar situation happens if we insist that the radial margin is the sharp number observable. In this case we have $c_{mm}(X)=\delta_m (X)$ (the Dirac measure concentrated on $m$) so that, in particular, $\int_{\{m\}} \Vert \eta_m(x)\Vert^2 \, d\mu(x) =1$ and we must have $\eta_m(x)=0$ for $\mu$-almost all $x\in \R_+ \setminus \{m\}$. It follows that $c_{mn}=\langle \eta_m | \eta_n\rangle = \delta_{mn}$ and the angle margin is therefore the trivial one: $\mathsf{P}_\textrm{angle} ( \Theta)   = \frac{1}{2\pi}\int_\Theta \, d\theta \, I$.

\begin{example}\label{ex1}\rm
As a first example, we consider phase space observables which, in addition to phase shift covariance, are covariant with respect to phase space translations represented by the displacement operators $D(z) = e^{za^* -\overline{z} a}$, $z\in\C$. Generally, any phase space POVM $\G$ which satisfies the covariance condition $D(z)\G(Z)D(z)^* = \G(Z+z)$, $z\in\C$, $Z\in\mathcal B(\C)$, is generated by a unique positive trace one operator $T$, giving the corresponding observable the explicit form \cite{Holevo1979, Werner1984}
\begin{equation}\label{eqn:ex1}
\G^T (Z) =\frac{1}{\pi} \int_Z D(z) TD(z)^* \, d^2z.
\end{equation}
However, not all of these observables are covariant with respect to phase shifts. Indeed, this is the case if and only if the generating operator is a mixture of number states, $T  = \sum_{k=0}^\infty \lambda_k \vert k \rangle\langle k\vert$, see \cite{LaPe99} for the proof of the `only if' -part and the calculation below. In order to connect these observables to the general structure discussed above, let us consider $\G^{\vert k\rangle}$, the observable generated by a single number state $\vert k \rangle\langle k\vert$. First note that we may write Eq.\ \eqref{eqn:ex1} as 
\begin{equation}\label{eqn:ex2a}
\G^{\vert k \rangle} (X\times \Theta)  = \sum_{m,n=0}^\infty \frac{1}{2\pi} \int_{X\times \Theta}  \langle m \vert D(re^{i\theta})  \vert k\rangle \overline{\langle n \vert D(re^{i\theta})  \vert k\rangle }\,  dr^2 \,d\theta\vert m \rangle\langle n \vert
\end{equation}
where the matrix elements of the displacement operators are given by
\begin{equation}\label{eqn:displacement_elements}
\langle m \vert D(re^{i\theta})  \vert k\rangle = (-1)^{\max\{ 0,k-m\}} \sqrt{\tfrac{(\min\{ m,k\})!}{(\max\{m,k\})!}} e^{i\theta(m-k)}r^{\vert m-k\vert}  L^{\vert m-k\vert}_{\min\{ m,k\}} (r^2) e^{-\frac{r^2}{2}}
\end{equation}
and $L_n^\alpha$ is the associated Laguerre polynomial (which generalizes the Laguerre polynomials $L_n=L_n^0$).
We can then define the probability measure $\mu:\h B(\R_+)\to [0,1]$ via  $d\mu (r) = e^{-r^2} \, dr^2$, as well as the vectors 
$$
\eta^k_m (r) = (-1)^{\max\{ 0,k-m\}} \sqrt{\tfrac{(\min\{ m,k\})!}{(\max\{m,k\})!}} r^{\vert m-k\vert}  L^{\vert m-k\vert}_{\min\{ m,k\}} (r^2) \, \varphi_k \in\hil
$$
where $\varphi_k\in\hil$ is an arbitrary fixed unit vector. The normalization condition $\int \Vert \eta^k_m (r) \Vert^2 \, d\mu(r)=1$ is then satisfied, and it is merely a simple observation that $\G^{\vert k\rangle}$ takes the form of Eq.\
\eqref{eqn:structure}. Note that the angle margin $\G^{\vert k \rangle}_\textrm{angle}$ is never the canonical phase observable \cite{LaPe99,LaPe00}.
\end{example}

\begin{example}\label{ex2}\rm
A second example can be obtained by modifying the above considerations. Indeed, if we replace the vectors $\eta^k_m (r)$ by $ \xi^k_m(r) = L_{\min \{ m,k \}} (r^2) \vp_k$ while keeping the measure $\mu$ unchanged, we get an observable $\mathsf{F}^{\vert k\rangle}:\h B(\C)\to\lh$  with the explicit form 
\begin{equation}\label{eqn:ex2}
\mathsf{F}^{\vert k \rangle} (X\times \Theta) =  \sum_{m,n=0}^\infty \frac{1}{2\pi} \int_{X\times \Theta} e^{i\theta(m-n)} L_{\min \{ m,k \}} (r^2)  L_{\min \{ n,k \}} (r^2) \, e^{-r^2}\, dr^2 \, d\theta \vert m\rangle\langle n\vert. 
\end{equation}
Unlike the observable $\G^{\vert k \rangle}$, this observable is not translation covariant, though it is clearly covariant with respect to phase shifts. The significant features of these observables are in their margins. For the angle margin phase observable $\mathsf{F}^{\vert k\rangle}_\textrm{angle}$, we get the phase matrix elements
$$
c_{mn} =  \int_0^\infty L_{\min \{ m,k \}} (r^2)  L_{\min \{ n,k \}} (r^2) \, e^{-r^2}\, dr^2 = \delta_{\min \{ m,k \}, \min \{ n,k \}}.
$$
In particular, the angle margin of $\mathsf{F}^{\vert 0\rangle}$ is the canonical phase. On the contrary, if we increase the value of $k$ we find that the  observable becomes in some sense more trivial. Indeed, since $c_{mn}=\delta_{mn}$ for $m,n\leq k$, the phase matrix always contains a $(k+1)\times (k+1)$ identity matrix in the upper left corner. As for the radial margin, we get 
$$
\mathsf{F}^{\vert k\rangle}_\textrm{rad} (X) = \sum_{m=0}^k c_{mm}(X) \vert m\rangle\langle m \vert  + c_{kk}(X) \sum_{m=k+1}^\infty  \vert m\rangle\langle m \vert    
$$
which shows that by increasing the value of  $k$ the observable becomes, vaguely speaking, more and more nontrivial. We also immediately recognize the obvious fact that for $k=0$ corresponding to the angle margin being the canonical phase, the radial margin is trivial.  
\end{example}

\begin{example}\rm
As a final example, we consider the case where the probability measure is a Dirac measure $\mu= \delta_{r_0}$ concentrated at some point $r_0 >0$. The corresponding phase space observable $\mathsf P^{r_0}:\,\mathcal{B}(\C)\to\lh$ can then be written as
\begin{equation}\label{eqn:ex3}
\mathsf P^{r_0} (X \times \Theta) = \sum_{m,n=0}^\infty  \langle \eta_m\vert \eta_n \rangle \, \delta_{r_0} (X)  \frac{1}{2\pi} \int_\Theta e^{i\theta (m-n)}\, d\theta \,  \vert m\rangle\langle n\vert.
\end{equation} 
We now immediately see that the angle margin $\mathsf P^{r_0}_\textrm{angle}$ is the phase observable with the phase matrix elements $c_{mn} = \langle \eta_m \vert \eta_n\rangle$, and the radial margin is the trivial (sharp) observable $\mathsf P^{r_0}_\textrm{rad}(X)= \delta_{r_0}(X) I$.
\end{example}

\section{Double homodyne detection scheme}
We will now turn our attention to the double homodyne detector (also called the eight-port homodyne detector), a well established method for measuring translation covariant phase space observables related to a single mode electromagnetic field \cite{MQSL, KiLa2008, LaPeSc2010}. This scheme makes use of the fact that the usual single homodyne detector provides a measurement of an arbitrary field quadrature. Indeed, when the signal field under investigation is coupled to a strong auxiliary field in a coherent state $\vert z\rangle$ using a $50:50$ beam splitter, and the scaled photon number difference $\frac{1}{\vert z\vert}(I\otimes N - N\otimes I)$ of the two output modes is measured,  then for a sufficiently large $\vert z\vert$ the measured observable is approximately the quadrature observable $\sfq_\theta$, where $\theta  = \arg z$ and $\sfq_\theta (X) = e^{i\theta N} \sfq(X) e^{-i\theta N}$, $\sfq$ being the canonical spectral measure on the real line (i.e., the position observable) (for more details, see, e.g., \cite{KiLa08}).

In double homodyne detection, the signal field is first coupled to a parameter field in some state $\sigma$ via a $50:50$ beam splitter, after which a phase shift of $-\frac{\pi}{2}$ is performed on one of the output modes (see Fig.\ \ref{eightport}). Balanced homodyne detection is then performed on each output mode, so that by choosing the phase of the auxiliary coherent field to be zero in both measurements, this corresponds to measuring the canonical spectral measure on $\R^2$ for the two-mode field. With this setup, the measured observable is the phase space observable $\G^{\sigma'}$ where the generating operator is connected to the state of the parameter field via the conjugation map $(C\varphi)(x) = \overline{\varphi(x)}$ as $\sigma'=C\sigma C$  \cite{KiLa2008}. In particular, if we want the observable to be phase shift covariant and thus give rise to a phase observable as the angle margin, we must use a parameter field which is diagonal in the number state representation, $\sigma = \sum_{k=0} \lambda_k \vert k\rangle\langle k \vert$. The simplest case is obviously obtained by using the vacuum $\sigma= \vert 0 \rangle\langle 0 \vert$.

\begin{figure}[h!]
\includegraphics[width=0.7\columnwidth]{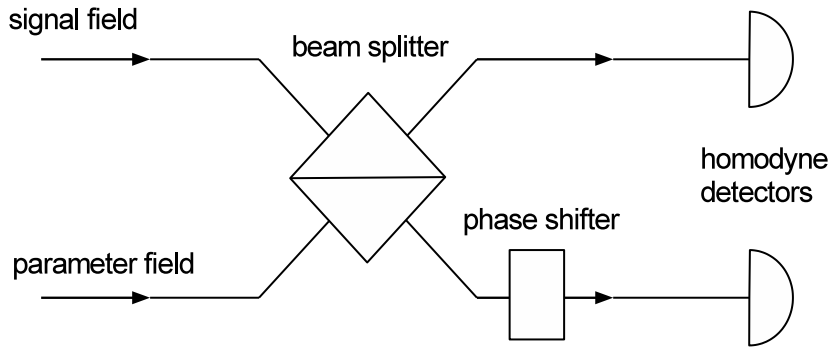}
\caption{Schematic of a double homodyne detector. The signal field is mixed with a parameter field by means of a $50:50$ beam splitter, after which a phase shift of $-\pi/2$ is performed on one of the modes. Balanced homodyne detection is then performed on both modes.}\label{eightport}
\end{figure}

The problem with using this measurement setup is of course caused by the fact that the canonical phase cannot be obtained as the margin of any translation covariant phase space observable \cite{LaPe99,LaPe00}. Thus, a modification of the setup is needed. We will next show that a suitable modification is obtained by adding a unitary coupling between the signal and parameter fields prior to the beam splitter. In other words, while in the usual double homodyne detection the signal and parameter fields are uncorrelated before entering the beam splitter, with this modification they will enter the beam splitter in an entangled state. With this method, it is in principle possible to measure also the observables $\mathsf{F}^{\vert k \rangle}$ encountered in Example \ref{ex2}.

To this end, first notice that in the usual double homodyne detector, the overall unitary coupling consisting of the beam splitter and the phase shifter is given by $U:\,\hi_{\rm in}\to\hi_{\rm out}$, 
\begin{equation}\label{eqn:8port_coupling}
\big(U  \vert m\rangle \otimes \vert n \rangle \big) (r,\theta)   =  \frac{1}{\sqrt{\pi}} \, \langle n \vert D(re^{i\theta})^* \vert m \rangle,
\end{equation}
where $\hi_{\rm in}\simeq\hi\otimes\hi$ consists of the input signal and parametric field modes and $\hi_{\rm out}\simeq\hi\otimes\hi$   is the output space. The observable measured with the two homodyne detectors is  then (or rather may be chosen to be) the canonical spectral measure $\sfq:\h B(\C)\to\h L(\hil_{\rm out})$. If $\rho$ and $\sigma$ are the states of the signal and parameter fields, respectively, then a direct computation shows that 
$$
\tr{U(\rho\otimes \sigma )U^* \sfq (Z) } = \tr{\rho \G^{\sigma'} (Z)}
$$
for all $Z\in\h B(\C)$. Hence, by considering only the angle margin $\sfq_{\rm angle}$ as the pointer observable, one can measure the angle margin $\G^{\sigma'}_{\rm}$. In particular, by preparing the parameter field in a state which is diagonal in the number basis, one  can measure any phase space phase observable.

Now it is easily seen from the discussion in Example \ref{ex2} that the total coupling needed for the measurement of $\mathsf{F}^{\vert k \rangle}$ is 
\begin{equation}\label{eqn:total_coupling}
\big( V \vert m\rangle\otimes \vert n\rangle \big) (r,\theta) = \frac{1}{\sqrt{\pi}} e^{i\theta(n-m)} L_{\min \{m,n\}} (r^2) e^{-\frac{r^2}{2}}.
\end{equation}
We want to exploit the fact that we already have at our disposal the unitary coupling \eqref{eqn:8port_coupling} of the usual double homodyne detector. Therefore, we will look for a unitary operator $W:\hil_{\rm in} \to\hil_{\rm in}$ such that $V = UW$, which would then amount to adding an extra component to the measurement setup prior to the beam splitter, see Fig.\ \ref{eightport_mod1}. \footnote{One could of course consider equally well a unitary operator $W':\hil_{\rm out} \to\hil_{\rm out}$ such that $V = W'U$, but this leads to a similar treatment.}

\begin{figure}[h!]
\includegraphics[width=0.8\columnwidth]{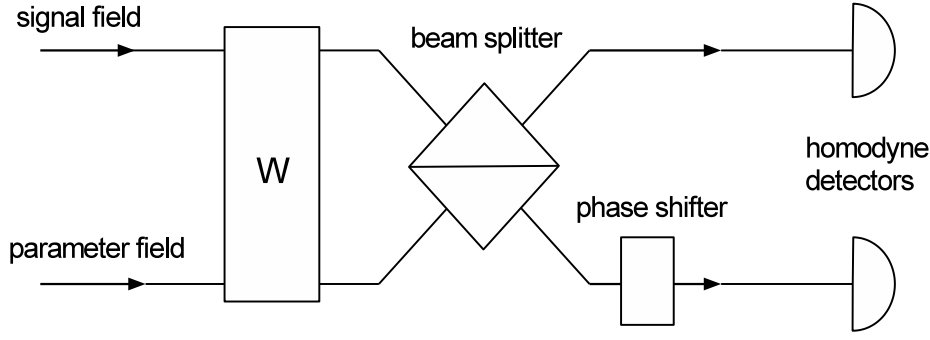}
\caption{A modified double homodyne detector where an additional unitary coupling $W$ is performed prior to the beam splitter. By choosing the parameter field to be in the vacuum state, on can obtain the canonical phase observable as the angle margin of the measured phase space observable.}\label{eightport_mod1}
\end{figure}

The action of the operator  $W$ is now given by
$$
W(\vert m\rangle\otimes \vert n \rangle)  = \sum_{k,l=0}^\infty \Big\langle U \vert k \rangle \otimes \vert l \rangle \Big|  V \vert m \rangle \otimes \vert n\rangle \Big\rangle \, \vert k \rangle \otimes \vert  l\rangle .
$$
Let us denote $\alpha_{kl,mn}= \big\langle U\vert k \rangle \otimes \vert l \rangle \big| V \vert m \rangle \otimes \vert n\rangle \big\rangle$ so that
{\small
\begin{align}
\alpha_{kl,mn}  &= (-1)^{\max\{ 0,l-k\}} \sqrt{\tfrac{\min\{ k,l\}!}{\max\{k,l\}!}}  \, \frac{1}{\pi} \int_0^\infty \int_0^{2\pi} e^{i\theta(k-l+n-m)}r^{\vert k-l\vert}  L^{\vert k-l\vert}_{\min\{ k,l\}} (r^2) L_{\min \{ m,n \}}(r^2)e^{- r^2} \, rdrd\theta\nonumber\\
= &\; \delta_{l,k+n-m} (-1)^{\max\{ 0,n-m\}} \sqrt{\tfrac{\min\{ k,k+n-m\}!}{\max\{k,k+n-m\}!}}  \,  \int_0^\infty x^{\frac{1}{2}\vert n-m\vert}  L^{\vert n-m\vert}_{\min\{ k,k+n-m\}} (x) L_{\min \{ m,n \}}(x)e^{- x} \, dx.\nonumber
\end{align}
}
The above integral can be further calculated, but it seems that a simple closed expression does not exist. However, note that this already tells us that $W(\vert m\rangle\otimes \vert n\rangle )$ is an eigenvector of the photon number difference operator $\Delta N$. Indeed, $\Delta N\,W(\vert m\rangle\otimes \vert n\rangle )=(m-n)W(\vert m\rangle\otimes \vert n\rangle )$ and
$
W(\vert m\rangle\otimes \vert n\rangle )
$
can be written as a series with respect to the RNSs $|m-n,s\>\>$, $s\in\N$.

Since the canonical phase is obtained as the special case $n=0$ of the above consideration, we will next focus on that. Now the above integrals are easily calculated and we obtain the expressions $W(\vert 0\rangle\otimes \vert 0 \rangle)=\vert 0\rangle\otimes \vert 0\rangle$ and, for $m>0$,
\begin{eqnarray}\nonumber
W(\vert m\rangle\otimes \vert 0 \rangle)  &=& \sum_{k=0}^\infty \frac{\Gamma(k+\tfrac{m}{2}+1)}{\sqrt{k!(k+m)!}}\frac{m}{2k+m} \vert k +m \rangle \otimes \vert k\rangle \\
&=&  \frac{m}{2}\sum_{k=0}^\infty \frac{\Gamma(k+\tfrac{m}{2})}{\sqrt{k!(k+m)!}} |m,k\>\>
=
\frac{m}{2{\sqrt{m!}}}\sum_{k=0}^\infty {\Gamma(k+\tfrac{m}{2})}{m+k\choose m}^{-1/2} |m,k\>\> \label{eqn:W_expression}
\end{eqnarray}
where $\vert k +m \rangle \otimes \vert k\rangle$ is the RNS $|m,k\>\>$.
From $V=UW$ one sees that the canonical phase distribution $\Theta\mapsto\tr{\rho\Ecan(\Theta)}$ of any state $\rho$ can be measured directly
by first producing an entangled two-mode input state $S(\rho)=W(\rho\otimes|0\>\<0|)W^*$ and then  measuring $S(\rho)$ via double  homodyne detection.

\begin{remark}\rm
Note that $W(\vert m\rangle\otimes \vert 0 \rangle)$ is a two mode nonlinear coherent state \cite{Wang} which are generally of the form
$
|\alpha,f,m\>=\sum_{k=0}^\infty C_k\vert k +m \rangle \otimes \vert k\rangle 
$
where the constants $C_k\in\C$ are such that $\sum_k|C_k|^2=1$.
Moreover, they satisfy the equations
$$
f(N\otimes I,I\otimes N)(a\otimes a)|\alpha,f,m\>=\alpha|\alpha,f,m\>,\qquad \alpha\in\C,
$$
where $f(N\otimes I,I\otimes N)$ is some fixed function of the single-mode number operators. As shown in \cite{Wang}, 
$$
C_k=\alpha^k\sqrt{\frac{m!}{k!(k+m)!}}\left[
\prod_{s=1}^k\frac{1}{f(s-1+m,s-1)}
\right]C_0
$$
so that $W(\vert m\rangle\otimes \vert 0 \rangle)=|\alpha,f,m\>$ where 
$\alpha=1$, $C_0={\Gamma(\tfrac{m}{2}+1)}/{\sqrt{m!}}$, and
$
f(n_1,n_2)=2/(n_1+n_2),
$
$n_1,\,n_2\in\N$, $n_1\ne 0$. 
\end{remark}

\begin{remark}\rm\label{remu2}
Note that there does not exist bounded (or unitary) operators $\widetilde D(re^{i\theta})$ such that
$\langle n \vert \widetilde D(re^{i\theta})^* \vert m \rangle=e^{i\theta(n-m)} L_{\min \{m,n\}} (r^2) e^{-\frac{r^2}{2}}$ for all $r$ and $\theta$. However, we can write
$$
\big( V\vert m\rangle\otimes \vert n\rangle \big) (r,\theta) =
\frac{1}{\sqrt{\pi}}\big\<\big\<\theta,r^2\big|\,|m\>\otimes|n\>\,\big\>
$$
where $|\theta,x\>\>:=\sum_{n=-\infty}^\infty \sum_{k=0}^\infty e^{in\theta} L_k(x)e^{-x/2}|n,k\>\>$, $\theta\in[0,2\pi)$, $x\ge 0$, is (essentially) a generalized vector introduced by Ban \cite[Eq.\ (19b)]{Ban2}, see Eq.\ \eqref{Banban}. Now the angle margin observable of the (two mode) spectral measure
$$
\mathsf O:\,\mathcal B(\C)\to\mathcal L(\hi\otimes\hi),\qquad
Z\mapsto\mathsf O(Z):= V^*\sfq(Z) V=\frac1\pi\int_Z|\theta,r^2\>\>\<\<\theta,r^2| rdrd\theta
$$
is $\mathsf O_{\rm angle}$ defined in Eq.\ \eqref{Qpilkku}, i.e.\ $\mathsf O(\R_+\times\Theta)=\mathsf O_{\rm angle}(\Theta)$.
The radial margin is
$$
\mathsf O_{\rm rad}(X):=
\mathsf O(X\times[0,2\pi))=\sum_{n=-\infty}^\infty\sum_{k,l=0}^\infty\int_X L_k(r^2)L_l(r^2)e^{-r^2}d r^2|n,k\>\>\<\<n,l|.
$$
Immediately one sees that $\tr{(\rho\otimes|k\>\<k|)\mathsf O(Z)}=\tr{\rho\F^{|k\>}(Z)}$.
\end{remark}

\begin{example}\rm
The action of $V$ is easy to calculate for some physically relevant states. For example, if the first (or the second) mode is in the coherent state $|\alpha\>=e^{-|\alpha|^2/2}\sum_{m=0}^\infty\alpha^m/\sqrt{m!}|m\>$
and the second mode is in the vacuum state $|0\>$ then
$$
(V|\alpha\>\otimes|0\>)(r,\theta)=\frac{e^{-|\alpha|^2/2}}{{\sqrt{\pi}}}\sum_{m=0}^\infty\frac{\alpha^m}{\sqrt{m!}}e^{-i\theta m} e^{-\frac{r^2}{2}}
=\frac{e^{-r^2/2}}{{\sqrt{\pi}}}\<\theta|\alpha\>,
$$
that is, we obtain the London distribution $\theta\mapsto\<\theta|\alpha\>$ of the coherent state.
Let then
$$
\psi_\alpha:=C(\alpha)\sum_{m=0}^\infty\frac{\alpha^m}{m!}|m\>\otimes|m\>,\qquad\alpha\in\C
$$
be a pair-coherent state where $C(\alpha):=J_0(2i|\alpha|)^{-1/2}$ is a normalization constant and 
$$
J_0(x)=\sum_{m=0}^\infty\frac{(-1)^m}{(m!)^2}\left(\frac{x}{2}\right)^{2m}
$$ 
is the zeroth Bessel function of the first kind. Now
$$
(V\psi_\alpha)(r,\theta)=\frac{C(\alpha)}{\sqrt{\pi}}\sum_{m=0}^\infty\frac{\alpha^m}{m!}L_m(r^2)e^{-r^2/2}
=\frac{C(\alpha)}{\sqrt{\pi}}J_0\big(2r\sqrt{\alpha}\big)e^{\alpha}e^{-r^2/2}
$$
whereas, for a `two-mode phase coherent state'
$\widetilde\psi^q_\alpha:=(1-|\alpha|^2)^{1/2}\sum_{m=0}^\infty{\alpha^m}|m\>\otimes|m+q\>,$ $q\in\N$, $\alpha\in\C$, $|\alpha|<1$, one gets the Gaussian function
$$
\big(V\widetilde\psi^q_\alpha\big)(r,\theta)=\frac{(1-|\alpha|^2)^{1/2}}{\sqrt{\pi}}e^{iq\theta}\sum_{m=0}^\infty\alpha^m L_m(r^2)e^{-r^2/2}
=
\frac{(1-|\alpha|^2)^{1/2}}{\sqrt{\pi}(1-\alpha)}e^{iq\theta-[\alpha/(1-\alpha)+1/2]r^2}
$$
(see Eqs.\ 8.975(3) and 8.975(1) of \cite{ruskeakirja}).
Finally, we note that, by using $V$, it is possible to produce the (weighted) monomial distributions $r^{2n}e^{-r^2/2}$ in the radial direction: Indeed, the (nonnormalized) state $\vp_n:=\sum_{m=0}^n(-1)^m{n\choose m}|m\>\otimes|m\>$ gives
$$
(V\vp_n)(r,\theta)=\frac{1}{\sqrt{\pi}}\sum_{m=0}^n(-1)^m{n\choose m}L_m(r^2)e^{-r^2/2}=\frac{1}{\sqrt{\pi}n!}r^{2n}e^{-r^2/2}.
$$
\end{example}

\section{Conclusions}
We have studied measurements of covariant phase observables, with the aim of obtaining a realistic measurement model for the canonical phase. To this end, we have first determined the minimal measurement models of phase observables. However, since practical issues suggest that a realistic phase measurement is most easily obtained by measuring a phase space observable, we have considered such joint measurements. As a special case, we have focused on the so-called phase shift covariant phase space observables. In particular, we have shown that the canonical phase may be obtained as the angle margin of such an observable. Finally, we have considered the quantum optical double homodyne detection scheme, and its modification  as  a means of measuring phase space observables. We have constructed a unitary coupling which, when placed in front of the setup, would allow one to measure the canonical phase observable.

\vspace{1cm}

\noindent{\bf Acknowledgments.} 
The authors thank Pekka Lahti for comments and suggestions. Financial support from the Academy of Finland (grant no. 138135.) and the Italian Ministry of Education, University and Research (FIRB project RBFR10COAQ) is gratefully acknowledged.

\end{document}